  \newcommand\figcaption{\def\@captype{figure}\caption}
  \newcommand\tabcaption{\def\@captype{table}\caption}
\newtheorem{Proposition}{Proposition}
\def\BibTeX{{\rm B\kern-.05em{\sc i\kern-.025em b}\kern-.08em
    T\kern-.1667em\lower.7ex\hbox{E}\kern-.125emX}}
\begin{document}
\title{Proactive Content Caching Scheme in Urban Vehicular Networks}
\author{
\IEEEauthorblockN{Biqian Feng, Chenyuan Feng, Daquan Feng, Yongpeng Wu, and Xiang-Gen Xia}
\thanks{B. Feng, and Y. Wu are with the Department of Electronic Engineering, Shanghai Jiao Tong University, Minhang 200240, China (e-mail: fengbiqian@sjtu.edu.cn; yongpeng.wu@sjtu.edu.cn).}
\thanks{C. Feng, and D. Feng are with  Guangdong-Hong Kong Joint Laboratory for Big Data Imaging and Communication, Shenzhen University, Shenzhen 518060, China (email: fengchenyuan@szu.edu.cn, fdquan@szu.edu.cn). \textit{(Corresponding author: Daquan Feng.)} }
\thanks{X.-G. Xia is with the Department of Electrical and Computer Engineering, University of Delaware, Newark, DE 19716, USA. (e-mail: xxia@ee.udel.edu).}
}

\maketitle

\begin{abstract}
Stream media content caching is a key enabling technology to promote the value chain of future urban vehicular networks. Nevertheless, the high mobility of vehicles, intermittency of information transmissions, high dynamics of user requests, limited caching capacities and extreme complexity of business scenarios pose an enormous challenge to content caching and distribution in vehicular networks. To tackle this problem, this paper aims to design a novel edge-computing-enabled hierarchical cooperative caching framework. Firstly, we profoundly analyze the spatio-temporal correlation between the historical vehicle trajectory of user requests and construct the system model to predict the vehicle trajectory and content popularity, which lays a foundation for mobility-aware content caching and dispatching. Meanwhile, we probe into privacy protection strategies to realize privacy-preserved prediction model. Furthermore, based on trajectory and popular content prediction results, content caching strategy is studied, and adaptive and dynamic resource management schemes are proposed for hierarchical cooperative caching networks. Finally, simulations are provided to verify the superiority of our proposed scheme and algorithms. It shows that the proposed algorithms effectively improve the performance of the considered system in terms of hit ratio and average delay, and narrow the gap to the optimal caching scheme comparing with the traditional schemes.
\end{abstract}

\section{Introduction}
With the gradual improvement of the degree of autonomous driving, the demand for in-vehicle entertainment services has been increasing. However, the high mobility of vehicles, intermittence of information transmission, high dynamics of popular content, limitations of cache capacity and the complexity of business scenarios bring great challenges to hot content prediction, multi-node collaborative cache distribution and service quality optimization. Effective caching system has attracted numerous scholars' concern in terms of vehicle trajectory prediction, popular content prediction, content placement and content delivery strategies.

\subsection{Related Works}
First of all, vehicle trajectory prediction plays a critical role in caching systems due to the high speed of vehicles and limited communication range of vehicle-to-infrastructure (V2I) links. In \cite{ref01}, Gaussian model and Long Short-Term Memory (LSTM) model are proposed  to predict vehicle trajectory. By extending \cite{ref01}, lots of variants of Markov model are proposed for location prediction, including $N$-order Markov model, hidden Markov model, and variable-order Markov model. Specifically, the $N$-order Markov model \cite{ref01_1} and hidden Markov model \cite{ref01_2} utilize the state transition matrix to predict the vehicles' future locations by computing the transition probability. In \cite{ref02, ref02_1}, variable-order Markov models are designed to solve the prediction problem by the help of Prediction by Partial Match (PPM) and Probabilistic Suffix Tree (PST) algorithms. However, the above-mentioned algorithms fail to intelligently distinguish the importance of the track data in different historical periods. 

As traditional passive caching schemes are becoming more and more unsuitable for the era of information explosion, proactive caching based on popular content prediction is proposed as a promising solution, in which the recommendation systems \cite{SASRec, NCF} are used to model the relationship between users and content and improve the prediction accuracy of user preferences \cite{ref03}. Recently, federated learning based method is used to improve the performance of context-aware popularity prediction scheme \cite{ref04}. However, the above-mentioned works ignore the impact of user mobility. In addition, LSTM-based two-tier cache architecture is proposed to cope with the user mobility, in which the high-speed and low-speed users are served by macro stations and small base stations, respectively, so as to avoid frequent switching of highly dynamic users \cite{ref05}. Although existing researches have made good progress in popular content prediction problem, they ignore the protection of user privacy. 

As for the content caching mechanism in vehicular networks, improving the utilization of storage space of Roadside units (RSUs) has attracted attention of researchers. In \cite{ref05_1}, the authors assume that the vehicle user requests are already known by a cache-aided network, and then propose a novel distributed caching strategy based on Gibbs sampling to optimize the cache hit probability. In \cite{ref06}, the block matrix method is used to extract users' preferences based on their historical interests in videos and select appropriate RSU to cache corresponding content. Besides, deep learning method, such as Q-learning algorithm, is also proposed to effectively improve the quality of service (QoS) within limited resources \cite{ref07, newref1, newref2}. In \cite{ref07_1}, the multi-agent reinforcement learning (MARL) are adopted by all wireless network nodes to collaboratively optimize the distributed caching strategy and maximize the network performance, which are measured by the average cache hit probability.

At last, existing works related to content distribution mechanism in vehicular networks can be divided into three categories, namely: mechanisms based on Vehicle-to-Infrastructure (V2I) communications \cite{ref08}, based on Vehicle-to-Vehicle (V2V) communications \cite{ref09}, based on collaborative communications of V2I and V2V \cite{ref10}. It is worth noting that most of related works assume vehicle trajectory data and user requests are known and lack consideration of the video coding characteristics, such as coding structure and bit rate. 

In brief, lots of state-of-the-art works have been carried out to improve the performance of multimedia content distribution, however, they lack comprehensive consideration of the inherent characteristics of vehicular networks, video coding characteristics, user service demands, and the difference analysis of business scenarios.

\subsection{Motivation and Contributions }
Motivated by these issues, we aim to integrate edge computing into the vehicular networks, and propose a framework of content caching and distribution to improve the quality of service (QoS), protect user privacy and also achieve a high resource utilization efficiency. To this end, we firstly build an integrated service framework for vehicle trajectory prediction and privacy-persevered popular content prediction based on deep analysis of the spatio-temporal correlation between vehicle trajectory and user requests. Furthermore,  we design a mobility-aware and business-adaptive algorithm for collaborative caching scheme based on optimization algorithms. The main contributions of this paper are summarized as follows.
\begin{enumerate}
\item We propose a Hierarchical Cooperative Caching Network (HCCN) architecture which consists of three layers, namely, the central cloud service layer, edge computing layer, and terminal equipment layer. The periodical processing procedure can be distinguished as three main execution phases: trajectory prediction, content popularity prediction, and content caching, which can adapt to the dynamic properties of vehicular ad hoc networks (VANETs) topology, provide real-time content popularity prediction, and reduce communication costs. Furthermore, a pipeline scheduling mechanism is proposed for parallel execution of prediction and transmission, which can reduce the service delay and improve the quality of experience (QoE).

\item We will make the utmost of the spatio-temporal correlation of historical trajectory data and design an LSTM-based model to predict the residence time in each RSU in the future. Specifically, the model extracts daily features from the daily trajectory, and fuses daily features into the historical feature information. Finally, the future trajectory prediction module aims at predicting the future residence time in each RSU by combining intraday trajectory and historical feature information.

\item Since the recent behavior can reveal vehicles' future preferences to a certain extent, we modify the self-attentive sequential recommendation (SASRec) model to predict future content requests. Furthermore, with the growing concern on data privacy and consideration of increasing on-board training data, we propose a Hierarchical Federated Learning (HFL)-based structure to train the SASRec network for each cluster. Hence, the content popularity of each RSU and macro base station (MBS) can be naturally derived based on the requirements of all connecting vehicles. 

\item Based on the aforementioned trajectory prediction and content popularity prediction results, we formulate an optimization problem for dynamic cooperative content caching. However, it is a large-scale 0-1 constrained problem, which is NP-hard in nature. To tackle it efficiently, we propose an adaptive gradient descent algorithm to enhance the performance of content caching, which is verified to perform well by our simulation results.
\end{enumerate}

The rest of this paper is organized as follows. Section \ref{HCCN} introduces our proposed HCCN architecture that is utilized to establish low-latency networks in Section \ref{section: system model}. Section \ref{Trajectory Prediction Scheme} depicts a vehicle trajectory prediction scheme. Section \ref{Recommendation System Scheme} proposes an HFL-based SASRec network to predict content popularity. Section \ref{section: dynamic content caching scheme} integrates trajectory prediction and content popularity prediction into dynamic and cooperative content caching scheme, and proposes an adaptive gradient descent algorithm to solve the large-scale 0-1 constrained problem. Section \ref{Simulation} provides some simulation results to evaluate the performances of our proposed schemes and algorithms. Section \ref{Conclusion} concludes the paper.

The notations used in this paper are as follows. Boldface lowercase and uppercase letters, such as $\mathbf{a}$ and $\mathbf{A}$, are used to represent vectors and matrices, respectively. Superscript $T$ stands for the transpose,  $\mathbb R$ is the set of real numbers, $\nabla L$ denotes the gradient of $L$ and $\left(\nabla L\right)_{\mathbf x}$ represents its $\mathbf x$-th component.

\section{Overall Design of the HCCN Architecture}
\label{HCCN}
In this section, we will introduce the overall design of our proposed HCCN architecture and the periodical processing procedure in detail. 
\subsection{Content Retrieving Scheme}
\begin{figure*}[htbp]
\centering
\includegraphics[scale=0.035]{}
\caption{An example of the edge computing-enabled content caching system in hierarchical vehicular networks, each edge node holds a caching content list and performs collaborative caching with its connecting instructions and devices in a federated learning manner.}
\label{system_architecture}
\end{figure*}

A novel network-level content caching protocol will be designed at first. As shown in Fig. \ref{system_architecture}, the hierarchical architecture consists of the following three layers:
\begin{itemize}
\item Cloud layer: it contains content providers, such as TikTok and YouTube, and cloud computing server to provide contents and computing services.

\item Edge computing layer: it contains all edge nodes, namely, RSUs, macro base stations (MBSs) and baseband unit (BBU) pools, where each MBS and multiple RSUs within its coverage area can form a cluster. In terms of communication, all MBSs can connect to each other and the central cloud through optical fibers, and communicate with the RSUs in their cluster through wireless links. As for caching, each RSU will send a caching list including the identification and location of caching contents to their connecting MBS,  all MBSs will merge the collected caching lists and exchange with each other, by this means, all MBSs have the same caching content lists containing the identification and location of  contents cached by all RSUs and MBSs, which facilitates mutual retrieval of cached content conveniently.

\item Terminal equipment layer: it contains all the vehicles and intelligent devices that need to be served along the road.
\end{itemize}

The proposed HCCN intends to maximize the network performance by leveraging the vertical cooperation among the MBSs and their connecting RSUs, the horizontal cooperation among the local RSU and its neighbor RSUs, and also among the local MBS and its neighbor MBSs. Specifically, when a vehicle sends a content request to its local RSU, the local RSU look through its own cache list to check whether the requested content is stored or not. If cached, the requested content will be transmitted directly to the vehicle from the local RSU. Otherwise, the local RSU will ask local MBS to check its caching list that contains the identification and location of contents cached by all RSUs and MBSs. The local MBS will search the required content according to the following order: firstly, the local MBS/RSUs in the same cluster, then, other MBSs/RSUs outside the current cluster, lastly, the cloud. If cached, the local MBS will fetch the requested content from the source node, and then forward it to the target RSU. Once received, the local RSU will send it to the target vehicle. The requested content can be provided by caching at either MBSs or RSUs, therefore, it greatly reduces the congestion between the target vehicles and the core network. Otherwise, the local MBS needs to send the request to the Internet and obtain the content from the source (i.e., content provider) in the cloud. In our HCCN framework, massive content requests for the same hot contents not only greatly relieve the burden on the core network, but also reduce the vehicles' service delay and improves the QoE of the vehicles.

\subsection{Pipeline Scheduling Mechanism}
As shown in Fig. \ref{pipeline_structure}, we propose a parallel pipeline scheduling mechanism, where the content service is executed periodically based on three stages: prediction, caching, and transmission. 
Firstly, the prediction phase includes vehicle trajectory and content popularity predictions. Trajectory prediction intends to predict the residue time in each RSU for each vehicle, while content popularity prediction aims at discerning the popular contents that will be required by the vehicles in the near future. 
Secondly, in the caching stage, all RSUs and MBSs implement the mobility- and popularity-aware proactive edge caching scheme to pursue a higher network resource utilization and provide users with better QoE. 
Finally, based on content caching and vehicles' characteristics, the edge computing layer performs an adaptive distribution mechanism for multimedia content service by dynamically configuring time-frequency resources. 
There are mainly two typical situations of real-time service in the transmission stage: i) the vehicle sends a request to the local RSU, then the local RSU has ability to obtain and send the requested content as soon as possible; ii) the local RSU can proactively provide some personalized contents for each vehicle based on its caching contents. Based on the results in the prediction stage and the caching stage, the content retrieval delay in both situations is greatly reduced in the transmission stage.
After making predictions and cache deployment decisions, edge nodes can execute the prediction phase of the next episode in parallel with content transmission phase of the current episode, which can take full advantage of  the spatial-temporal correlation among trajectory data and content popularity. By this means, the proposed mechanism earns a shorter service delay and higher efficiency, compared with the traditional serial scheduling manner. 
\begin{figure*}[htbp]
\centering
\includegraphics[scale=0.05]{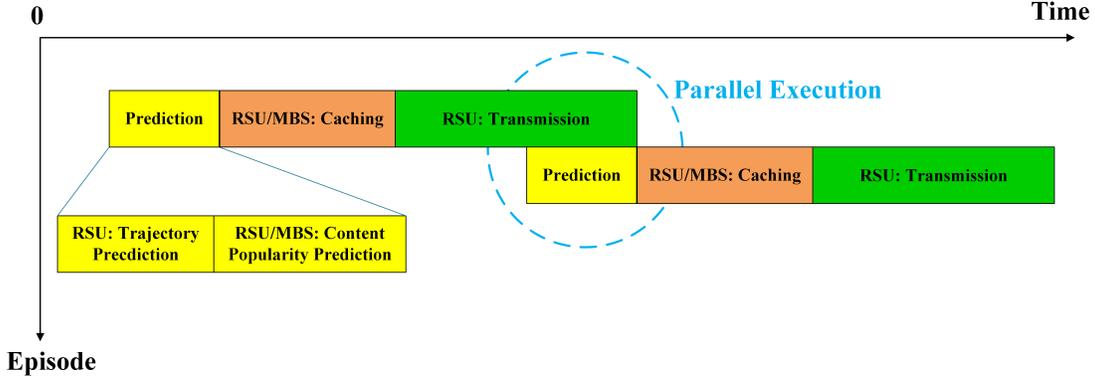}
\caption{An example of the parallel pipeline scheduling mechanism for prediction, caching and transmission, the edge nodes could execute the new prediction phase in parallel with current content transmission phase. }
\label{pipeline_structure}
\end{figure*}
 
\section{System Model}
\label{section: system model}
To effectively leverage the advantage of the HCCN architecture described in Section \ref{HCCN}, we intend to formulate a cooperative caching problem to minimize the content prefetching latency in urban roads in this section.

\subsection{Network Architecture}
As shown in Fig. \ref{system_architecture}, we consider a vehicular edge computing network with three different types of edge caching nodes, including MBSs, RSUs, and vehicle nodes. Let $\mathcal M=\left\{1,2,\cdots,M\right\}$, $\mathcal R=\left\{1,2,\cdots, R\right\}$, and $\mathcal V=\left\{1,2,\cdots,V\right\}$ represent the index sets of MBSs, RSUs, and vehicle nodes, respectively. In the urban road network, RSUs are deployed intensively for the high traffic flow. According to physical locations, MBS $m$ can manage a group of RSUs , $\mathcal R_m\subseteq\mathcal R$, within its coverage area. In this work, we define a cluster as one MBS and its connecting RSUs. Since the transmission cost of cellular communications is much higher than that of vehicle-to-RSU (V2R) communications, vehicles prefer to retrieve contents from nearby RSUs. Let $\mathcal F=\left\{1, 2, \cdots, F\right\}$ denote the index set of files provided by content providers and each content $f\in\mathcal F$ has a size of $s_f$. Since MBSs and RSUs are equipped with limited storage capacities, let $S_m^{\text{MBS}}$ and $S_r^{\text{RSU}}$ denote the caching spaces of MBS $m$ and RSU $r$, respectively, $\mathcal F_m^\text{MBS}\subseteq \mathcal F$ and $\mathcal F_r^\text{RSU} \subseteq \mathcal F $ denote the content sets stored by MBS $m$ and RSU $r$, respectively.

\subsection{Content Caching Policy}
\label{Caching Decision}
To meet the requirements of the transmission stage at time slots $\mathcal T=\left\{1,2,\cdots, T\right\}$ in Fig. \ref{pipeline_structure}, the contents should be collaboratively cached by target RSUs and MBSs in advance. Let $\mathbf x_r=\left(x_{r,1}, x_{r,2}, \cdots, x_{r,F}\right)^T$ denote the caching decision vector of RSU $r$, where $x_{r,f}\in\left\{0, 1\right\}$ is a Boolean variable to indicate caching placement decision, namely, $x_{r,f}=1$ if file $f$ is cached by RSU $r$, otherwise, $x_{r,f}=0$. Since the total size of cached files cannot exceed the entire storage capacity of RSUs, $\mathbf x_r$ must satisfy the following constraint:
\begin{equation}
    \sum_{f\in\mathcal F}x_{r,f} s_f\leq S_r^\textrm{RSU}.
\end{equation}
Similarly, let vector $\mathbf y_m=\left(y_{m,1}, y_{m,2}, \cdots, y_{m,F}\right)^T$ represent the caching decision of MBS $m$, which should satisfy the following constraints:
\begin{equation}
    \sum_{f\in\mathcal F}y_{m,f} s_f\leq S_m^\text{MBS}.
\end{equation}
The content retrieval delay is generally positively correlated with the distance from the source node to the destination node. Define $\gamma^{CM}, \gamma^{MR}$, and $\gamma^{MM}$ as the transmission rate of backhaul links between the cloud and MBS, fronthaul links between the MBS and its connecting RSU, and the links between two MBSs, respectively. Apparently, $\gamma^{MR}, \gamma^{MM} \gg \gamma^{CM}$. The total content retrieval delay is given by:
\begin{equation}
    \label{total content retrieval delay1}
    \begin{aligned}
        \gamma_{r,f}=\gamma_{r,f}^0+\gamma_{r,f}^1+\gamma_{r,f}^2+\gamma_{r,f}^3+\gamma_{r,f}^4+\gamma_{r,f}^5,
    \end{aligned}
\end{equation}
where $\gamma_{r, f}^{i} $ denotes the retrieval delay of content $f$ fetched by RSU $r$ from its own cache if $i = 0$, from its local MBS if $i = 1$, from other RSUs within the same cluster if $i =2$, from the other MBSs if $i=3$, from other RSUs outside its cluster if $i=4$, and from the cloud network if $i=5$. Specifically, they are determined by the size of content, the transmission rate of all links, and the caching decision $x_{r,f}$ and $y_{r,f}$:
\begin{align}
    \label{content retrieval delay}
    \gamma_{r, f}^{0}&=0,\quad \gamma_{r, f}^{1}=\frac{s_{f}}{\gamma^{MR}} \left(1-x_{r, f}\right) y_{m, f}, \notag\\
    \gamma_{r, f}^{2}&=2 \frac{s_{f}}{\gamma^{M R}}\left(1-x_{r, f}\right)\left(1-y_{m, f}\right)\notag\\
    &\qquad\qquad\bigg[1-\prod_{r^{\prime} \in \mathcal{R}_{m}, r^{\prime} \neq r}\left(1-x_{r^{\prime}, f}\right)\bigg], \notag\\
    \gamma_{r, f}^{3}&=\left(\frac{s_{f}}{\gamma^{M M}} + \frac{s_{f}}{\gamma^{M R}}\right)\left(1-y_{m, f}\right) \notag\\
    &\qquad\qquad\prod_{r^{\prime} \in \mathcal R_m}\left(1-x_{r^{\prime}, f}\right)\bigg[1-\prod_{m^{\prime} \neq m}\left(1-y_{m^{\prime}, f}\right)\bigg], \\
    \gamma_{r, f}^{4}&=\left(\frac{s_{f}}{\gamma^{M M}} + 2\frac{s_{f}}{\gamma^{M R}}\right) \prod_{r^{\prime} \in \mathcal{R}_{m}}\left(1-x_{r^{\prime}, f}\right)  \notag\\
    &\qquad\qquad\prod_{m^{\prime} \in \mathcal{M}}\left(1-y_{m^{\prime}, f}\right)\bigg[1-\prod_{r^{\prime} \notin \mathcal{R}_{m}}\left(1-x_{r^{\prime}, f}\right)\bigg], \notag\\
    \gamma_{r, f}^{5}&=\left(\frac{s_{f}}{\gamma^{C M}} + \frac{s_{f}}{\gamma^{M R}}\right) \prod_{r^{\prime} \in \mathcal{R}}\left(1-x_{r^{\prime}, f}\right) \prod_{m^{\prime} \in \mathcal{M}}\left(1-y_{m^{\prime}, f}\right),\notag
\end{align}
If the content is cached in the local RSU, then the local RSU forwards it to the vehicle directly, thus the delay is $0$; If the content $f$ is not cached in the local RSU but cached in the local MBS, i.e., $1-x_{r, f}=1$ and $y_{m, f}=1$, then $\gamma_{r, f}^{1}>0$ from content retrieving link MBS-RSU and $\gamma_{r, f}^{i}=0, i\neq 1$; Similarly, the content $f$ fetched in other nodes can be represented by $\gamma_{r, f}^{i}, i=2,3,4$. Substituting \eqref{content retrieval delay} into \eqref{total content retrieval delay1}, the total content retrieval delay can be rewritten as
\begin{equation}
\label{total content retrieval delay}
\begin{aligned}
&\gamma_{r,f}=\gamma^{MR} s_{f}\bigg[\left(1-x_{r, f}\right)+\left(1-x_{r, f}\right)\left(1-y_{m, f}\right)\\
&\quad+\prod_{r^{\prime} \in \mathcal R_m}\left(1-x_{r^{\prime}, f}\right)\prod_{m^{\prime} \in \mathcal{M}}\left(1-y_{m^{\prime}, f}\right)\bigg]\\
&\quad+\left(\gamma^{M M} - \gamma^{M R}\right)s_f\prod_{r^{\prime} \in \mathcal R_m}\left(1-x_{r^{\prime}, f}\right)\left(1-y_{m, f}\right)\\
&\quad+\left(\gamma^{CM} - \gamma^{M M} - \gamma^{M R}\right)s_f\prod_{r^{\prime} \in \mathcal R}\left(1-x_{r^{\prime}, f}\right)\\
&\quad\prod_{m^{\prime} \in \mathcal{M}}\left(1-y_{m^{\prime}, f}\right).
\end{aligned}
\end{equation}
Since $\gamma^{MR}$, $\gamma^{CM}$, $\gamma^{MM}$, and $s_f$ are fixed in a system, the above function $\gamma_{r,f}$ is a function of variables $\mathbf x$ and $\mathbf y$. Let binary variables $\theta_{v,r,t}^1 \in \left\{0, 1\right\}$ and $\theta_{v,f,t}^2 \in \left\{0, 1\right\}$ indicate whether user $v$ enters the coverage of RSU $r$ at time slot $t$, and whether user $v$ requests for the video $f$ at time slot $t$, respectively. Therefore, the transmission cost of user $v$ retrieving content $f$ by RSU $r$ at time slot $t$ is $L_{u,r,f,t}\triangleq \theta_{v,r,t}^1 \theta_{v,f,t}^2 \gamma_{r,f}$. Assuming that user interests in a certain content are independent to their locations, then the expected cost of caching content $f$ by RSU $r$ is given by 
\begin{equation}
\begin{aligned}
	&\mathbb E\left(L_{v,r,f,t}\right) = \mathbb E\left[\theta_{v,r,t}^1\right] \mathbb E\left[\theta_{v,f,t}^2\right]\gamma_{r,f},
\end{aligned}
\end{equation}
where $\mathbb E\left[\theta_{v,r,t}^1\right]$ and $\mathbb E\left[\theta_{v,f,t}^2\right]$ can be considered as the probabilities of vehicle $v$ staying in the coverage of RSU $r$ and retrieving the content $f$ at time slot $t$, respectively. We consider the users' interests will not change in a short time, i.e., $\mathbb E\left[\theta_{v,f,t}^2\right]$ keeps the same in a scheduling duration. For simplicity, we drop the time slot subscript $t$ and the expected cost is restated as follows:
\begin{equation}
\begin{aligned}
	&\mathbb E\left(L_{v,r,f,t}\right) = \mathbb E\left[\theta_{v,r,t}^1\right] \mathbb E\left[\theta_{v,f}^2\right]\gamma_{r,f}.
\end{aligned}
\end{equation}
Furthermore, the total expected caching cost of the system is shown as follows:
\begin{equation}
\label{Wxy}
\begin{aligned}
	&W\left(\mathbf x,\mathbf y\right)\triangleq\sum_{r \in\mathcal R}\sum_{f\in\mathcal F}\sum_{v\in\mathcal V}\sum_{t\in\mathcal T}\mathbb E\left(L_{v,r,f,t}\right)\\
	&=\sum_{r \in\mathcal R}\sum_{f\in\mathcal F}\sum_{v\in\mathcal V} \mathbb E\left[\theta_{v,f}^2\right]\gamma_{r,f}\sum_{t\in\mathcal T}\mathbb E\left[\theta_{v,r,t}^1\right],
\end{aligned}
\end{equation}
where $\sum_{t\in\mathcal T}\mathbb E\left[\theta_{v,r,t}^1\right]$ represents the residence time in RSU $r$ of vehicle $v$. From \eqref{total content retrieval delay} and \eqref{Wxy}, one can see that the above total expected caching cost $W\left(\mathbf x,\mathbf y\right)$ only depends on $\mathbf x$ and $\mathbf y$.

\subsection{Problem Formulation}
The caching contents update regularly with a relatively long cycle, e.g., 30 min in \cite{RSUupdate1}, several hours in \cite{RSUupdate2}, and one day in \cite{RSUupdate3}. In this paper, we aim to design a cooperative cache scheme among all RSUs and MBSs in the entire region. Note that since MBSs/RSUs take into account the future served vehicles when they make a decision of caching, the requested contents are deployed by the upcoming RSUs with a high probability so that the intermittency of information transmissions is improved.
The proactive caching problem with the objective of minimizing the total expected caching cost is formulated as follows:
\begin{equation}
\label{Caching Cost Problem}
\begin{aligned}
	& \min_{x_{r,f}, y_{m,f}} && W\left(\mathbf x,\mathbf y\right) \\
	& \,\,\quad\textrm{s.t.} && P_r\left(\mathbf x_r\right)\triangleq\sum_{f\in\mathcal F}x_{r,f} s_f - S_r^{\text{RSU}}\leq 0, \forall r\in\mathcal R, \\
	&&& Q_{m}\left(\mathbf y_m\right)\triangleq\sum_{f\in\mathcal F}y_{m,f} s_f - S_m^\text{MBS}\leq 0, \forall m\in\mathcal M,\\
	&&& x_{r,f}\in\left\{0, 1\right\},\quad y_{m,f}\in\left\{0, 1\right\}.
\end{aligned}
\end{equation}
The caching deployment optimization problem aims to reduce the total expected caching cost via adjusting caching deployment with the limited storage capacity of RSUs and MBSs. Note that as the inherent behavioral properties, the residence time $\sum_{t\in\mathcal T}\mathbb E\left[\theta_{v,r,t}^1\right]$ and the retrieving probability $\mathbb E\left[\theta_{v,f}^2\right]$ are significant for the caching deployment $\mathbf x, \mathbf y$ and the system performance. Meanwhile, they are not affected by (independent of) the caching deployment $\mathbf x, \mathbf y$. Particularly, they will be efficiently estimated in Section \ref{Trajectory Prediction Scheme} and Section \ref{Recommendation System Scheme}, respectively.

Remark: The cloud layer is responsible for collecting the residence time, i.e., $\sum_{t\in\mathcal T}\mathbb E\left[\theta_{v,r,t}^1\right]$ and the probabilities of retrieving the content, i.e.,  $\mathbb E\left[\theta_{v,f}^2\right]$ from MBSs and RSUs and solve the problem (9). Specifically on one hand, RSU can predict the future trajectory of the vehicles, i.e., the residence time $\sum_{t\in\mathcal T}\mathbb E\left[\theta_{v,r,t}^1\right]$, then upload the results to the cloud layer via the local MBS; On the other hand, MBS and RSUs collaboratively execute HFL to predict some contents most likely to be requested, i.e., $\mathbb E\left[\theta_{v,f}^2\right]$, then all MBSs upload the results to the cloud layer. After collecting these information, the cloud layer can solve the problem (9) efficiently.

\section{Trajectory Prediction Scheme}
\label{Trajectory Prediction Scheme}
The residence time of vehicles staying in each RSU is of great importance for content caching placement decisions since the probability of requesting content from RSU $m$ increases in proportional to the time duration going through its communication range. Most previous works assume that the future location can be completely known in advance in some ways, for example, the route can be available by GPS \cite{Huaqing Wu}, or the vehicles are assumed to keep going straight along the expressways \cite{Zhengxin Yu}. However, in practice, the entire GPS data cannot be obtained by all MBSs and RSUs along the road due to privacy concerns. Besides, there are many crossroads and forks making it impossible for vehicles to keep going straight all the time. To compensate this issue, we propose a trajectory prediction scheme in this section.

\subsection{Overall Framework}
In urban roads, there are a large number of intricate types of roads, such as straights, curves, ramps, bridges, tunnels, crosses/T-junctions, etc. Meanwhile, to alleviate the dependence on GPS data and reduce the computational complexity, we are inclined to adopt the connection order of surrounding RSUs to describe the vehicle trajectory, and it is conveniently obtained by recording identifications of all the vehicles served by each RSU.

As shown in Fig. \ref{trajectory}, the overall framework contains historical feature extraction and future trajectory prediction. The historical feature can be extracted from two aspects, namely, daily feature extraction and feature fusion. Specifically, daily features are extracted from the daily trajectory at first, and then fused into the historical feature information. Finally, the future trajectory prediction module aims at predicting the future residence time in each RSU by combining intraday trajectory and historical feature information.
\begin{figure*}[htbp]
\centering
\includegraphics[scale=0.05]{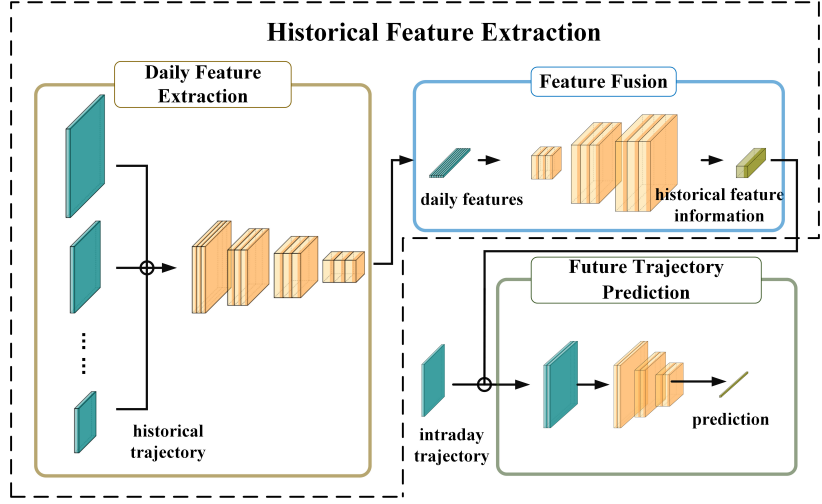}
\caption{Architecture of trajectory prediction model, which consists of historical feature extraction and fusion, as well as future trajectory prediction}
\label{trajectory}
\end{figure*}

\subsection{LSTM-based Trajectory Prediction}
In this section, LSTM-based algorithm \cite{DL} is proposed to extract historical features and make predictions of the future trajectory. In the setting of LSTM-based trajectory prediction model for the vehicle $v$, given its historical trajectory in the last $L$ days $\mathcal Z^v=\left(\mathcal Z_1^v, \mathcal Z_2^v, \cdots, \mathcal Z_{L}^v\right)$ and intraday trajectory $\mathcal Z_{L+1}^v$, we construct an LSTM-based trajectory prediction scheme to map the historical trajectory to its corresponding residence time vector in all the RSUs $\mathbf o^v=\left(o_1^v,o_2^v,\cdots,o_R^v\right)$. 
In this work, for the  target range and the period of time, the location of vehicle will be recorded at every interval. By this means, the trajectory sequence of vehicle $v$ on the $\ell$-th day can be expressed as $\mathcal Z_\ell^v\triangleq\left( z_{\ell,1}^v, z_{\ell,2}^v, \cdots, z_{\ell,N}^v\right)$, whose element $z_{\ell,i}^v\in\mathcal R$ represents the location of vehicle $v$ on the $\ell$-th day at timestamp $i$, and the longer the consecutive identical positions implies the longer time that the vehicle stays in the same RSU. Note that the daily trajectory has a fixed-length of $N$ via truncation or padding. 

\subsubsection{Embedding Layer} For all the $R$ RSUs, we use the zero padding method to create the RSU embedding matrix $\mathbf R\in\mathbb R^{d_\textrm{RSU} \times (R+1)}$, which is a linear map from the RSU set to a $d_\textrm{RSU}$-dimensional vector space. Note that the matrix contains $R+1$ columns since we consider an additional virtual RSU whose element is padded as 0. The embedding matrix for vehicle $v$ in the $\ell$-th day is given by
\begin{equation}
\hat{\mathbf R}_\ell^{v} = \left[
\begin{aligned}
	\mathbf R_{z_{\ell,1}^v},
	\mathbf R_{z_{\ell,2}^v},
	\cdots,
	\mathbf R_{z_{\ell,N}^v}
\end{aligned}
\right],
\end{equation}
where $\mathbf R_j$ is the $j$-th column of the embedding matrix $\mathbf R$.

\subsubsection{Daily Feature Extraction} The information of daily trajectory is extracted by the first LSTM structure \cite[\S 10.10]{DL}, i.e. $\textrm{LSTM}_1$:
\begin{equation}
\mathbf h_{\ell}^v, \mathbf c_{\ell}^v = \textrm{LSTM}_1\left(\hat{\mathbf R}_\ell^{v}\right), \ell=1,2,\cdots, L,
\end{equation}
where $\mathbf h_{\ell}^v, \mathbf c_{\ell}^v$ are the hidden state vector and cell state vector, respectively.

\subsubsection{Feature Fusion} The information of periodic behavioral characteristics is extracted by the second LSTM structure, i.e. $\textrm{LSTM}_2$:
\begin{equation}
\mathbf h_{\textrm{his}}^v, \mathbf c_{\textrm{his}}^v = \textrm{LSTM}_2\left(\mathbf h_{1}^v, \mathbf c_{1}^v, \mathbf h_{2}^v, \mathbf c_{2}^v, \cdots, \mathbf h_{L}^v, \mathbf c_{L}^v\right),
\end{equation}
where $\mathbf h_{\textrm{his}}^v, \mathbf c_{\textrm{his}}^v$ are the final historical feature information.

\subsubsection{Residence Time Prediction} Finally, after extracting the historical information of previous trajectory, the prediction of the residence time is given by
\begin{equation}
\label{TrajectoryPrediction}
\begin{aligned}
\bar{\mathbf h}^v, \bar{\mathbf c}^v &= \textrm{LSTM}_3\left(\hat{\mathbf R}_{L+1}^{v}, \mathbf h_{\textrm{his}}^v, \mathbf c_{\textrm{his}}^v\right),\\
\hat{\mathbf o}^v &= \textrm{ReLU}\left(\bar{\mathbf h}^v \mathbf W + \mathbf b\right),
\end{aligned}
\end{equation}
where $\mathbf W$ and $\mathbf b$ are trainable parameters. ReLU is an activation function defined as $\textrm{ReLU}(x) \triangleq \max\left\{0, x\right\}$. Moreover, all learnable parameters are denoted by $\boldsymbol{\theta}^\textrm{traj}$ including parameters in all LSTM structures and $\mathbf W, \mathbf b$.

In the model training, the input is the sequence $\mathcal Z_\ell^v, \ell=1,2,\cdots,L+1$, the expected output is the corresponding residence time vector $\mathbf o^v$, and the mean squared error loss is adopted as the objective function:
\begin{equation}
\begin{aligned}
	\mathcal L^\textrm{traj}=\sum_{v\in\mathcal V}\frac{1}{2}\|\mathbf o^v-\hat{\mathbf o}^v\|^2.
\end{aligned}
\end{equation}
We adopt offline learning to train the trajectory prediction model which is deployed in all RSUs after training for prediction tasks. Only forward propagation is taken in the prediction stage, so there is no time limit for the training phase. After prediction, the final $\hat{\mathbf o}^v$ is regarded as an estimation of $\sum_{t\in\mathcal T}\mathbb E\left[\theta_{v,r,t}^1\right]$ that can be used in $W\left(\mathbf x, \mathbf y\right)$ in Problem \eqref{Caching Cost Problem}. Note that the residence time can be computed directly.

\section{Recommendation System Scheme}
\label{Recommendation System Scheme}
Since the sequential dynamics are a key feature to capture the context of vehicles' recent activities, in this section, we take the SASRec-based network to predict future content requests of vehicles. Furthermore, with the consideration of the increasing privacy concerns and the ever-growing distributed training data, we propose a Hierarchical Federated learning (HFL) structure to train the SASRec network for each cluster.

\begin{figure*}
\centering
\includegraphics[scale=0.05]{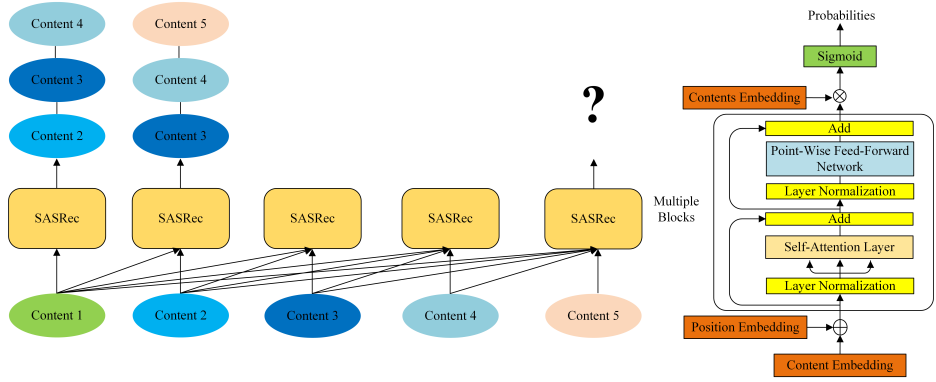}
\caption{(a) A simplified diagram showing the training process of SASRec with $I=5, I^\prime=3$. At each time step, the model considers all previous items to predict the next requested contents. (b) SASRec-model structure.}
\label{recommendation}
\end{figure*}

\subsection{SASRec Model}
As shown in Fig. \ref{recommendation}(a), in the sequential recommendation setting, since the lengths of requested content sequences of each vehicle might be different, it is not desirable to predict the future requirements with all previous contents. We consider a fixed size for all user requested content sequences by truncating or padding, and the maximum length is set as $I$ for the vehicle $v$, i.e., $\mathcal F^v=\left(F_1^v, F_2^v,\cdots,F_{I}^v\right)$. During training, at the $i$-th request file, the model utilizes the previous $i$ files, i.e., $\left(0, 0,\cdots, 0, F_1^v, F_2^v,\cdots,F_{i}^v\right)$, to predict the next $I^\prime>1$ files, where $I-I^\prime-i$ files are padded with $0$. In this paper, we extend the original SASRec model with $I^\prime = 1$ in \cite{SASRec} to $I^\prime>1$, allowing for the recommendation of multiple contents of interest for each vehicle so as to meet the requirements of the vehicle to a great extent.

\subsubsection{Embedding Layer} For the total of $F$ available files, we use the zero padding method to create two embedding matrices $\mathbf M\in\mathbb R^{d \times (F+1)}$ and $\mathbf P\in\mathbb R^{d \times (I-I^\prime)}$ to denote the content embedding matrix and the positional embedding matrix, respectively, where $d$ is the latent dimensionality of both embedding matrices. Note that the content embedding matrix contains $F+1$ columns since we consider an additional virtual file whose element is padded as 0. Then, the embedding matrix for vehicle $v$ is given by
\begin{equation}
\label{step 1}
\hat{\mathbf E}^v = \left[
\begin{aligned}
	\mathbf M_{F_1^v} + \mathbf P_1,
	\mathbf M_{F_2^v} + \mathbf P_2,
	\cdots,
	\mathbf M_{F_{I-I^\prime}^v} + \mathbf P_{I-I^\prime}
\end{aligned}
\right],
\end{equation}
where $\mathbf M_j$ and $\mathbf P_j$ are the $j$-th columns of the embedding matrices $\mathbf M$ and $\mathbf P$, respectively.

\subsubsection{Self-Attention Block} The information of previously consumed contents is extracted by self-attention layer and point-wise feed-forward network. Specifically, the self-attention operation takes the embedding matrix as input, converts it to three matrices through linear projections, and feeds them into an attention layer:
\begin{equation}
\label{step 2}
\begin{aligned}
	&\hat{\mathbf S}^v\triangleq\textrm{SA}\left(\hat{\mathbf E}^v\right)=\textrm{Attention}\left(\hat{\mathbf E}^v\mathbf W^Q, \hat{\mathbf E}^v\mathbf W^K, \hat{\mathbf E}^v\mathbf W^V\right)\\
	&=\textrm{Softmax}\left(\frac{\hat{\mathbf E}^v\mathbf W^Q\left(\hat{\mathbf E}^v\mathbf W^K\right)^T}{\sqrt{d}}\right)\hat{\mathbf E}^v\mathbf W^V,
\end{aligned}
\end{equation}
where  $\mathbf W^Q, \mathbf W^K, \mathbf W^V\in\mathbb{R}^{d\times d}$ denote the projection matrices, $\textrm{Attention}$ denotes the function of scaled dot-product attention mechanism, $d$ is the scale factor to avoid overly large values of the inner product. 
In addition to attention sub-layers, our model contains a fully connected feed-forward network, which is applied to each position separately and identically. This consists of two linear transformations with a ReLU activation in between:
\begin{equation}
\label{step 3}
\begin{aligned}
	\hat{\mathbf F}^v\triangleq\textrm{FFN}\left(\hat{\mathbf S}^v\right) = \textrm{ReLU}\left(\hat{\mathbf S}^v \mathbf W_1 + \mathbf b_1\right)\mathbf W_2 + \mathbf b_2,
\end{aligned}
\end{equation}
where $\mathbf W_1, \mathbf W_2$ are $d \times d$ matrices and $\mathbf b_1, \mathbf b_2$ are $d$ dimensional vectors. 
Besides, three efficient policies can also be considered: i) stacking self-attention blocks to make the model learn more complex content transitions; ii) adopting a dropout operation to avoid overfitting; iii) using residual connections and layer normalization to stabilize and accelerate the network training process. 

\subsubsection{Output Layer}
Finally, after multiple self-attention blocks extract information of previously requested contents, the prediction of the next contents is given by
\begin{equation}
\label{step 4}
\hat{r}_{i,f}^v = \textrm{Sigmoid}\left(\hat{\mathbf F}_i^v \mathbf M_f \right),
\end{equation}
where $\hat{\mathbf F}_i^v$ is the $i$-th row of the matrix $\hat{\mathbf F}^v$ and denotes the feature vector of the vehicle $v$ after the $i$-th request, $\mathbf M_f$ is the $f$-th row of matrix $\mathbf M$ and denotes the item embedding vector of content $f$. The goal of original SASRec model \cite{SASRec} is to seek to identify which items are `relevant' from one user's historical behavior and use them to predict the next item. Limited by the performance of predicting the next one item accurately, it is not effective to predict the future consecutive multiple items with recursive multi-step forecast method. As shown in Fig. 4(b), the redesigned SASRec model adds a Sigmoid layer to represent the probability of a vehicle requesting all contents over a period of time in the future.

In the network training, let $\boldsymbol{\theta}^\textrm{rec}=\{\mathbf{M}, \mathbf P, \mathbf W^Q, \mathbf W^K, \mathbf W^V, \mathbf W_1,  \mathbf W_2, \mathbf b_1, \mathbf b_2\}$  denote all the learnable parameters, and we adopt the following binary cross entropy loss as the objective function:
\begin{equation}
\begin{aligned}
&\mathcal L^\textrm{rec} \triangleq\sum_{v\in\mathcal V} \mathcal L_v^\textrm{rec}  \\
&\triangleq \sum_{v\in\mathcal V}\sum_{i=1}^I\left[\sum_{f\in\tilde{\mathcal F}_{i}^v}\log\left(\hat{r}_{i,f}^v\right)+\sum_{f\notin\tilde{\mathcal F}^v}\log\left(1-\hat{r}_{i,f}^v\right)\right].
\end{aligned}
\end{equation}
Different from trajectory prediction in the last section, the training of recommendation system by minimizing $\mathcal L^\textrm{rec}$ can only be executed when the vehicle stays in the coverage of one certain RSU. The distributed training method will be introduced in the next subsection. After training, the estimation of $\mathbb E\left[\theta_{v,f}^2\right]$ for the next requested items can be computed via the same steps as \eqref{step 1}-\eqref{step 4}, which can be used in $W\left(\mathbf x, \mathbf y\right)$ in Problem \eqref{Caching Cost Problem}.

\subsection{HFL-based SASRec System}
In order to protect the privacy of user data, we take an HFL-based structure to train the SASRec systems instead of the centralized training in \cite{SASRec}. Compared with the traditional federated learning designed for a single cluster, HFL architecture is more suitable for more datasets provided by massive vehicles in a larger network, which can also improve the accuracy of the model. On the other hand, the vehicle stays in the MBS for a longer time, which helps to have more time to stabilize the model's performance. Thus, as shown in Fig. \ref{HFL}, we consider an HFL system that has one MBS $m$, $R^m$ RSUs indexed by $\mathcal R_m = \left\{1,2,\cdots, R^m\right\}$, and $V^m$ vehicles indexed by $\mathcal V_m = \left\{1,2,\cdots, V^m\right\}$. RSU $r\in\mathcal R_m$ manages $V_r^m$ vehicles indexed by $\mathcal V_r^m=\left\{1,2,\cdots, V_r^m\right\}$.

The key steps of the HFL proceed as follows. After every $\kappa_1$ local updates at each vehicle, each RSU will collect and aggregate local models from its vehicles, and then distribute  the aggregated model to them. After every $\kappa_2$ edge model aggregations at each RSU, the MBS will collect and aggregate all edge models from all RSUs in its cluster,  and distribute the latest global model to them, which means the global model aggregations at MBSs happen every $\kappa_1\kappa_2$ local updates. Let $\boldsymbol{\theta}_{v}^\textrm{rec}\left(\kappa\right)$ denote the local model parameters of vehicle $v\in\mathcal V_r^m$ after the $\kappa$-th local update. The evolution of local model parameters $\boldsymbol{\theta}_{v}^\textrm{rec}\left(\kappa\right)$ is given by
\begin{equation}
\boldsymbol{\theta}_{v}^\textrm{rec}\left(\kappa\right)=\left\{
\begin{aligned}
&\boldsymbol{\theta}_{v}^\textrm{rec}\left(\kappa-1\right) - \eta^\textrm{rec}\nabla\mathcal L_v^\textrm{rec}, & \kappa | \kappa_1 \neq 0\\
&\frac{\sum_{v\in\mathcal V_r^m} \boldsymbol{\theta}_{v}^\textrm{rec}\left(\kappa-1\right)}{V_r^m}, & \kappa | \kappa_1 = 0, \kappa | \kappa_1\kappa_2 \neq 0\\
&\frac{\sum_{v\in\mathcal V^m} \boldsymbol{\theta}_{v}^\textrm{rec}\left(\kappa-1\right)}{V^m}, & \kappa | \kappa_1\kappa_2 = 0
\end{aligned}\right.
\end{equation}
where $\eta^\textrm{rec}$ is the step size of gradient descent. Different from offline training for trajectory prediction model, the training for recommendation system is time-sensitive due to the high dynamics of vehicles. All vehicles continuously interact with all RSUs and MBSs along the road to transmit gradients other than raw historical user preferences, so as to protect user privacy. Once a vehicle leaves the current RSU before accomplish local training, it fails to upload the updated model to the target RSU, 
which might  lead to waste of  computation capacity. The authors in \cite{Zhengxin Yu} propose  a simple measure  by selecting slow-moving vehicles that can finish local training and uploading before leaving as the participants in edge model training. The details of HFL-based SASRec system are presented in Algorithm \ref{ContentPrediction}. Note that in the HFL-based SASRec system, the convergence is guaranteed in \cite[\S III]{HFL}.

In our future work, we will further improve the performance of the content popularity prediction scheme to better adapt to the characteristics of vehicular networks, including establishing more effective prediction models for special areas and developing faster training schemes.
\begin{algorithm}
\caption{HFL-based SASRec}
\label{ContentPrediction}
{\bf Input:} Initial model parameter $\boldsymbol{\theta}_{v}^\textrm{rec}\left(0\right)$. {\bf Output:} $\left\{\mathbb E\left[\theta_{v,f}^2\right]\right\}_{v\in\mathcal V,f\in\mathcal F}$

\begin{algorithmic}[1]
	\FOR{$\kappa=1, 2, ...$}
	\FOR{each cluster in parallel}
	\IF{$\kappa | \kappa_1 \neq 0$} 
	\STATE Each vehicle updates its local model with $\boldsymbol{\theta}_{v}^\textrm{rec}\left(\kappa\right)=\boldsymbol{\theta}_{v}^\textrm{rec}\left(\kappa-1\right) - \eta\nabla\mathcal L_v^\textrm{rec}$ in parallel. 
	\ELSIF{$\kappa | \kappa_1\kappa_2 \neq 0$}
	\STATE Each RSU in cluster $m$ aggregates models: $\boldsymbol{\theta}_{v}^\textrm{rec}\left(\kappa\right)=\frac{\sum_{v\in\mathcal V_r^m} \boldsymbol{\theta}_{v}^\textrm{rec}\left(\kappa-1\right)}{V_r^m}$ and downloads to all served vehicles in parallel.
	\ELSE
	\STATE MBS $m$ aggregates models: $\boldsymbol{\theta}_{v}^\textrm{rec}\left(\kappa\right)=\frac{\sum_{v\in\mathcal V^m} \boldsymbol{\theta}_{v}^\textrm{rec}\left(\kappa-1\right)}{V^m}$ and downloads to all vehicles.
	\ENDIF
	\ENDFOR
	\ENDFOR
	\STATE Estimate the probability of vehicles $v$ retrieving content $f$, i.e., $\mathbb E\left[\theta_{v,f}^2\right] = \textrm{Sigmoid}\left(\hat{r}_{I,f}^v\right)$.
\end{algorithmic}
\end{algorithm}

\begin{figure*}[htbp]
\centering
\includegraphics[scale=0.13]{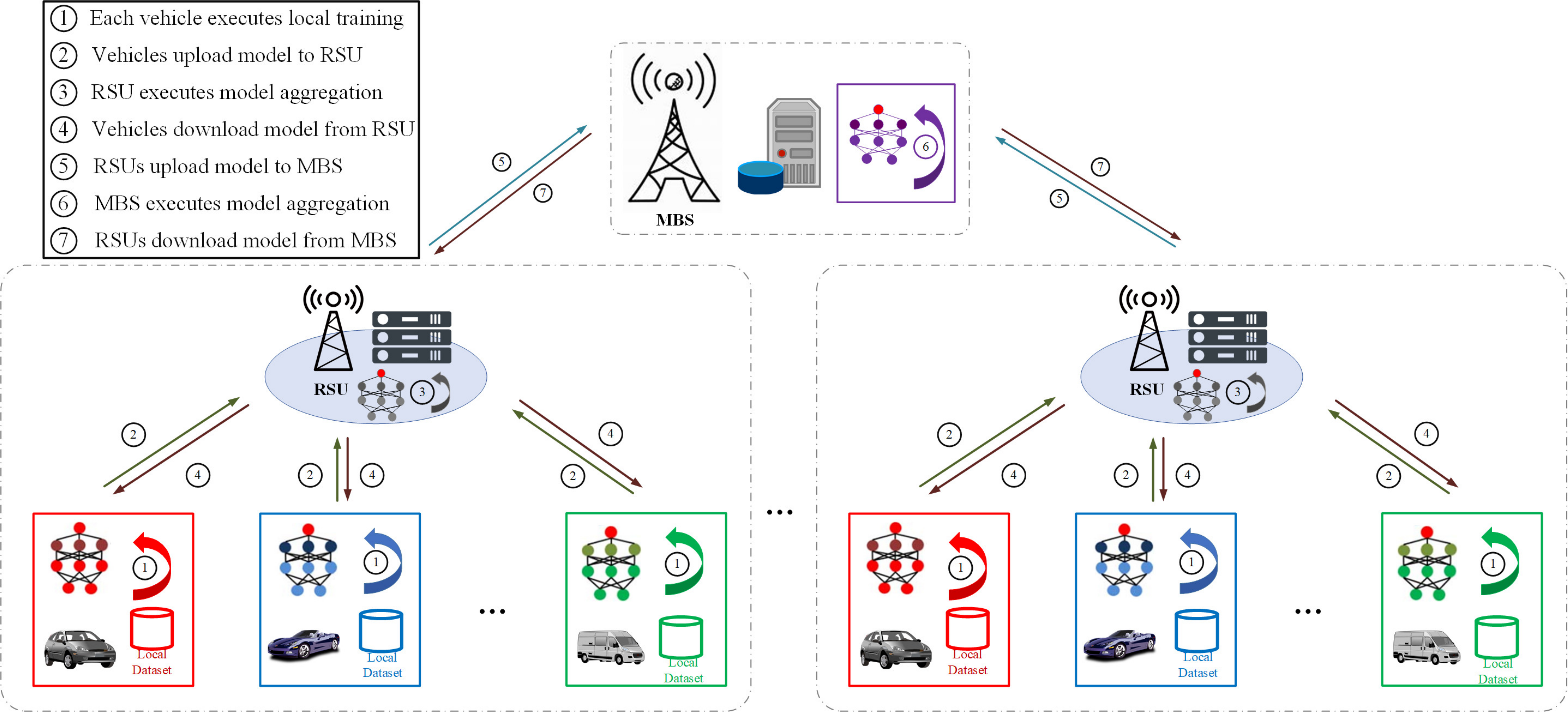}
\caption{Framework of the HFL-based SASRec system.}
\label{HFL}
\end{figure*}

\section{Dynamic Content Caching Scheme}
\label{section: dynamic content caching scheme}
In this section, we propose an adaptive gradient descent-based content caching policy to tackle the large-scale 0-1 constrained problem, including  continuous relaxation and penalty coefficient adaptation. Moreover, dynamic content caching is to integrate the proposed caching policy with the aforementioned trajectory prediction and content popularity prediction for the dynamic topology and the real-time request.

\subsection{Penalty Method}
Solving the original problem \eqref{Caching Cost Problem} faces two main challenges: i) the number of binary decision variables $x_{r,f}$ and $y_{m,f}$ are tremendous due to abundant contents; and ii) the number of constraints are enormous due to the high density of edge nodes. Both issues severely hinder the efficacy of tackling this large-scale optimization problem. Inspired by \cite{0-1 KP}, we first relax $(M+R)F$ constraints about binary decision variables to
\begin{equation}
\begin{aligned}
	0\leq &x_{r,f}\leq 1, \quad 0\leq &y_{m,f}\leq 1.
\end{aligned}
\end{equation}
Since tackling the original caching problem with numerous constraints that may greatly exceed nowaday computing capability, the Sigmoid function is introduced to remove these constraints as: $x_{r,f} = \textrm{Sigmoid}(\tilde{x}_{r,f}), \tilde{x}_{r,f}\in\mathbb R$ and $y_{m,f} = \textrm{Sigmoid}(\tilde{y}_{m,f}), \tilde{y}_{m,f}\in\mathbb R$. More formally, the binary decision variables are relaxed as follows:
\begin{equation}
\label{Recover1}
\begin{aligned}
x_{r,f} & \triangleq h\left(\tilde{x}_{r,f}\right) = \textrm{Sigmoid}\left(\tilde{x}_{r,f}\right),\\
y_{m,f} & \triangleq h\left(\tilde{y}_{m,f}\right) = \textrm{Sigmoid}\left(\tilde{y}_{m,f}\right).
\end{aligned}
\end{equation}
With the proposed approximation, the original problem can be reformulated as follows:
\begin{equation}
\label{relaxation}
\begin{aligned}
\min_{\tilde{x}_{r,f}, \tilde{y}_{m,f}} &W\left(h\left(\mathbf{\tilde{x}}\right),h\left(\mathbf{\tilde{y}}\right)\right) \\
\textrm{s.t.} \quad & P_r\left(h\left(\tilde{\mathbf x}_r\right)\right)\leq 0, \quad\forall r\in\mathcal R, \\
&Q_{m}\left(h\left(\mathbf{\tilde{y}}_m\right)\right)\leq 0, \quad\forall m\in\mathcal M.
\end{aligned}
\end{equation}
This relaxation reduces the number of constraints from $(M+R)(F+1)$ to $M+R$. For ease of notations in this section, we simply use $W\left(\mathbf{\tilde{x}},\mathbf{\tilde{y}}\right)$, $P_r\left(\tilde{\mathbf x}_r\right)$, and $Q_{m}\left(\mathbf{\tilde{y}}_m\right)$ to represent $W\left(h\left(\mathbf{\tilde{x}}\right),h\left(\mathbf{\tilde{y}}\right)\right)$, $P_r\left(h\left(\tilde{\mathbf x}_r\right)\right)$, and $Q_{m}\left(h\left(\mathbf{\tilde{y}}_m\right)\right)$, respectively, and use $\mathbf w=\left(\mathbf w_{\mathbf{\tilde{x}}_r}, \mathbf w_{\mathbf{\tilde{y}}_r}\right)$,  $\mathbf p_r$, and $\mathbf q_m$ to represent their gradients w.r.t. $\mathbf{\tilde{x}},\mathbf{\tilde{y}}$, respectively.

The penalty method  in \cite[Eq. 17.2]{NumOptim} enables us to rewrite \eqref{relaxation} as follows:
\begin{equation}
\label{extended objective function}
\begin{aligned}
\min_{\mathbf{\tilde{x}},\mathbf{\tilde{y}}}&\quad L\triangleq W\left(\mathbf{\tilde{x}},\mathbf{\tilde{y}}\right)+\frac{1}{2}\beta\sum_{r\in\mathcal R} \textrm{ReLU}\left[P_r\left(\tilde{\mathbf x}_r\right)\right]^2 \\
&\qquad+\frac{1}{2}\beta\sum_{m\in\mathcal M} \textrm{ReLU}\left[Q_{m}\left(\mathbf{\tilde{y}}_m\right)\right]^2,
\end{aligned}
\end{equation}
where $L$ is the extended objective function and $\beta >0 $ is the penalty coefficient. We utilize the gradient descent method to optimize $L$ by
\begin{equation}
\label{gradient update}
\begin{aligned}
	\mathbf{\tilde{x}}_r&\leftarrow \mathbf{\tilde{x}}_r -\eta\left(\nabla L\right)_{\mathbf{\tilde{x}}_r},\quad
	\mathbf{\tilde{y}}_m&\leftarrow \mathbf{\tilde{y}}_m -\eta\left(\nabla L\right)_{\mathbf{\tilde{y}}_m},
\end{aligned}
\end{equation}
where $\eta>0$ is  a sufficiently small stepsize (or say learning rate). $\nabla L$ is the gradient of $L$ and its $\mathbf{\tilde{x}}_r$-component and $\mathbf{\tilde{y}}_m$-component are given by
\begin{equation}
\begin{aligned}
\left(\nabla L\right)_{\mathbf{\tilde{x}}_r}  &= \mathbf w_{\mathbf{\tilde{x}}_r} + \beta \textrm{ReLU}\left[P_r\left(\tilde{\mathbf x}_r\right)\right] \mathbf p_r,\\
\left(\nabla L\right)_{\mathbf{\tilde{y}}_m}  &= \mathbf w_{\mathbf{\tilde{y}}_m} + \beta \textrm{ReLU}\left[Q_m\left(\tilde{\mathbf y}_m\right)\right] \mathbf q_m.
\end{aligned}
\end{equation}

The role of $\beta$ is essential for the convergence of the optimization process. 
From experience, a large $\beta$ is preferred for a fast convergence rate, however, an  overlarge $\beta$ will
disturb the gradient descent algorithm once faced a small cost overrun. Therefore, we propose the following adaptive gradient descent algorithm that dynamically controls $\beta$ in the next subsection, where the coefficient is adjusted based on information from the solution obtained at last iteration.

\subsection{Adaptive Gradient Descent Algorithm}
We first introduce the following preliminary result that the objective function and constraints have Lipschitz continuous gradients.
\begin{Proposition}
\label{prop0}
The gradients of the objective function and the constraints are all Lipschitz continuous, i.e.,
\begin{equation}
	\begin{aligned}
		& \|\nabla W\left(\mathbf{\tilde{x}_1},\mathbf{\tilde{y}_1}\right) - \nabla W\left(\mathbf{\tilde{x}_2},\mathbf{\tilde{y}_2}\right)\| \leq \lambda_w \|(\mathbf{\tilde{x}_1}, \mathbf{\tilde{y}_1}) -(\mathbf{\tilde{x}_2},\mathbf{\tilde{y}_2})\|,\\
		& \|\nabla P_r\left(\mathbf{\tilde{x}_{r,1}}\right) - \nabla P_r\left(\mathbf{\tilde{x}_{r,2}}\right)\|\leq \lambda_r \|\mathbf{\tilde{x}_{r,1}} - \mathbf{\tilde{x}_{r,2}}\|,\\
		& \|\nabla Q_m\left(\mathbf{\tilde{y}_{m,1}}\right) - \nabla Q_m\left(\mathbf{\tilde{y}_{m,2}}\right)\|\leq \lambda_m \|\mathbf{\tilde{y}_{m,1}} - \mathbf{\tilde{y}_{m,2}}\|.\\
	\end{aligned}
\end{equation}
\end{Proposition}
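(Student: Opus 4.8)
The plan is to reduce each of the three Lipschitz-gradient inequalities to a single uniform bound on the spectral norm of the corresponding Hessian. Indeed, if a twice continuously differentiable map $g$ satisfies $\|\nabla^2 g(\mathbf z)\|\le \lambda$ for every $\mathbf z$ in its (convex) domain, then writing $\nabla g(\mathbf z_1)-\nabla g(\mathbf z_2)=\int_0^1 \nabla^2 g\bigl(\mathbf z_2+t(\mathbf z_1-\mathbf z_2)\bigr)(\mathbf z_1-\mathbf z_2)\,dt$ and taking norms yields $\|\nabla g(\mathbf z_1)-\nabla g(\mathbf z_2)\|\le \lambda\|\mathbf z_1-\mathbf z_2\|$. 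Hence it suffices to exhibit finite bounds $\lambda_w,\lambda_r,\lambda_m$ on the Hessians of $W$, $P_r$, and $Q_m$, viewed as functions of the relaxed variables through the reparametrization $h=\mathrm{Sigmoid}$ of \eqref{Recover1}. The two ingredients that make every such Hessian bounded are: (i) the sigmoid $h$ is smooth with $0<h<1$, $0<h'\le\tfrac14$, and $|h''|$ bounded on all of $\mathbb R$; and (ii) $h$ confines its output to the compact box $[0,1]$, so the arguments $(\mathbf x,\mathbf y)$ fed into $W$, $P_r$, $Q_m$ always lie in $[0,1]^{(R+M)F}$.

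I would dispatch the constraints first, as they are the easy case. From \eqref{Caching Cost Problem} and \eqref{Recover1}, $P_r(\tilde{\mathbf x}_r)=\sum_{f\in\mathcal F}s_f\,h(\tilde x_{r,f})-S_r^{\mathrm{RSU}}$ is separable across coordinates, so its Hessian is diagonal with entries $s_f\,h''(\tilde x_{r,f})$. The spectral norm of a diagonal matrix is its largest absolute diagonal entry, hence $\|\nabla^2 P_r\|\le (\max_f s_f)\sup_{z}|h''(z)|=:\lambda_r<\infty$ uniformly in $\tilde{\mathbf x}_r$; the identical argument applied to $Q_m(\tilde{\mathbf y}_m)=\sum_f s_f h(\tilde y_{m,f})-S_m^{\mathrm{MBS}}$ gives $\lambda_m$.

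The objective $W$ requires the compactness observation. Inspecting \eqref{total content retrieval delay} and \eqref{Wxy}, $W(\mathbf x,\mathbf y)$ is a \emph{polynomial} (in fact multilinear, since each factor $1-x_{r',f}$ or $1-y_{m',f}$ appears to degree one) in the un-relaxed variables. On the compact box $[0,1]^{(R+M)F}$ every partial derivative of a polynomial is continuous and therefore uniformly bounded; in particular all first- and second-order partials of $W$ admit a finite bound there. Now compose with the componentwise map $h$: because the first layer $h$ acts coordinatewise, its Jacobian is diagonal, and the chain rule expresses each second partial of $W\circ h$ as a finite sum of products of (a) first- or second-order partials of $W$ evaluated at a point of $[0,1]^{(R+M)F}$ and (b) the factors $h'$, $h''$. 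Every term is a product of bounded quantities, so the Hessian of $W\circ h$ admits a finite uniform bound $\lambda_w$ over all of $\mathbb R^{(R+M)F}$, which completes the argument.

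The point that deserves the most care — and the main obstacle — is precisely this confinement step: the relaxed problem \eqref{relaxation}--\eqref{extended objective function} is posed on the \emph{unbounded} domain $\tilde{\mathbf x},\tilde{\mathbf y}\in\mathbb R$, on which $W$ by itself has no global Hessian bound. Lipschitz continuity of the gradient is recovered only after the sigmoid reparametrization, which simultaneously supplies the globally bounded derivatives $h',h''$ and caps the arguments of $W$ inside $[0,1]^{(R+M)F}$. I would therefore be careful to state the bounds for the composed maps $W\circ h$, $P_r\circ h$, $Q_m\circ h$ — which are what the gradient steps \eqref{gradient update} actually differentiate — rather than for $W, P_r, Q_m$ on the raw box; once this is made explicit, the chain-rule bookkeeping is routine and only the finiteness of the constants, not their exact values, is needed.
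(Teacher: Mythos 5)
Your proof is correct and follows essentially the same route as the paper's: both arguments rest on the observation that $W$ is a multivariate polynomial in the un-relaxed variables, hence has bounded first- and second-order partial derivatives on the box $[0,1]^{(R+M)F}$, so its Hessian there is bounded, and that composing with the smooth sigmoid reparametrization \eqref{Recover1} preserves Lipschitz continuity of the gradient; the constraints are handled via their separable, affine structure in $(\mathbf x,\mathbf y)$. The one place where you go further than the paper is worth keeping. The paper's appendix only ever concludes \emph{local} Lipschitz continuity of the gradients and defers the passage from $(\mathbf x,\mathbf y)$ to $(\tilde{\mathbf x},\tilde{\mathbf y})$ to ``similar steps,'' whereas the proposition as stated --- and the descent estimates in Propositions \ref{prop1}--\ref{prop3}, which apply fixed constants $\lambda_w,\lambda_r,\lambda_m$ at arbitrary iterates in the unbounded relaxed domain --- require a \emph{global} bound. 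Your confinement argument, namely that $h=\mathrm{Sigmoid}$ has globally bounded $h'$ and $h''$ and maps all of $\mathbb R$ into $[0,1]$ so that every term produced by the chain rule for $\nabla^2(W\circ h)$ is uniformly bounded over $\mathbb R^{(R+M)F}$, is exactly what closes that gap; likewise your explicit diagonal Hessian $s_f\,h''(\tilde x_{r,f})$ for $P_r\circ h$ and $Q_m\circ h$ pins down $\lambda_r$ and $\lambda_m$ more cleanly than the paper's two-step ``linear, then compose'' remark.
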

\begin{proof}
See Appendix \ref{Appendix_prop0}.
\end{proof}

The key idea of the adaptive gradient descent algorithm is to balance all the descent directions of the objective function and the budget constraints, which ensures the objective function continue to decline under the premise of guaranteed constraints. Specifically, when $P_r(\tilde{\mathbf x}_r) \leq 0, Q_m(\tilde{\mathbf y}_m) \leq 0$, i.e., no cost overrun, there is no need to control $\beta$ thanks to the rectifier ReLU. Otherwise, we need simply to set $\beta = 0$ to enable a more aggressive search for the objective without considering the budget constraint. When $P_r(\tilde{\mathbf x}_r) > 0, Q_m(\tilde{\mathbf y}_m) > 0$, we need to adjust $\eta$ and $\beta$ for its steepest gradient descent direction. Next, we give some conditions about $\eta$ and $\beta$ to decrease the objective and constraints.

\begin{Proposition}
\label{prop1}
The value of objective function does not increase after one gradient descent update, i.e., 
$ W\left(\mathbf{\tilde{x}}-\eta\left(\nabla L\right)_{\mathbf{\tilde{x}}},\mathbf{\tilde{y}} -\eta\left(\nabla L\right)_{\mathbf{\tilde{y}}} \right) \leq W\left(\mathbf{\tilde{x}},\mathbf{\tilde{y}}\right),$
if the following sufficient conditions hold:
\begin{equation}
	\label{obj_eta}
	\begin{aligned}
		0\leq\eta\leq \eta_w \triangleq \frac{2\left(\|\mathbf w\|^2+\beta\phi\right)}{\lambda_w\left(\|\left(\nabla L\right)_{\mathbf{\tilde{x}}_r}\|^2 + \|\left(\nabla L\right)_{\mathbf{\tilde{y}}_m}\|^2\right)},
	\end{aligned}
\end{equation}

\begin{equation}
	\label{obj_beta}
	\beta\left\{
	\begin{aligned}
		&\geq 0, &\quad \phi\geq 0,\\
		&\leq-\frac{\|\mathbf w\|^2}{\phi}, &\quad \phi< 0,
	\end{aligned}\right.
\end{equation}
where $\phi$ is defined as $\phi \triangleq \sum_{r\in\mathcal R} \textrm{ReLU}\left[P_r\left(\tilde{\mathbf x}_r\right)\right] \mathbf p_r^T \mathbf w_{\mathbf{\tilde{x}}_r} + \sum_{m\in\mathcal M} \textrm{ReLU}\left[Q_{m}\left(\mathbf{\tilde{y}}_m\right)\right] \mathbf q_{m}^T \mathbf w_{\mathbf{\tilde{y}}_m}$.
\end{Proposition}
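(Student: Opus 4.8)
The plan is to derive the claim from the descent lemma that the $\lambda_w$-Lipschitz continuity of $\nabla W$ (Proposition \ref{prop0}) supplies, and then to verify that the two hypotheses \eqref{obj_eta} and \eqref{obj_beta} are precisely what make the resulting quadratic upper bound non-positive. Writing $\mathbf z=(\tilde{\mathbf x},\tilde{\mathbf y})$ for the current iterate and $\mathbf d=\nabla L$ for the full descent direction with blocks $(\nabla L)_{\tilde{\mathbf x}_r}$ and $(\nabla L)_{\tilde{\mathbf y}_m}$, smoothness of $W$ yields the standard bound
\[
W(\mathbf z-\eta\mathbf d)\le W(\mathbf z)-\eta\,\nabla W(\mathbf z)^T\mathbf d+\tfrac{\lambda_w}{2}\eta^2\|\mathbf d\|^2 .
\]
It therefore suffices to force the two trailing terms to sum to something non-positive.

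The key step is to compute the directional derivative $\nabla W^T\mathbf d=\mathbf w^T\nabla L$ block-wise. Substituting the gradient components $(\nabla L)_{\tilde{\mathbf x}_r}=\mathbf w_{\tilde{\mathbf x}_r}+\beta\,\textrm{ReLU}[P_r]\mathbf p_r$ and $(\nabla L)_{\tilde{\mathbf y}_m}=\mathbf w_{\tilde{\mathbf y}_m}+\beta\,\textrm{ReLU}[Q_m]\mathbf q_m$ and summing over $r\in\mathcal R$ and $m\in\mathcal M$, the diagonal pieces collect into $\|\mathbf w\|^2$ while the penalty cross-terms collect into exactly $\beta\phi$, with $\phi$ the quantity defined in the statement. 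This gives the crucial identity $\nabla W^T\mathbf d=\|\mathbf w\|^2+\beta\phi$, which expresses how much $W$ actually decreases along the direction $-\nabla L$ even though that direction also carries the penalty gradient.

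Inserting this identity, the descent requirement reduces to $-\eta\bigl(\|\mathbf w\|^2+\beta\phi\bigr)+\tfrac{\lambda_w}{2}\eta^2\|\mathbf d\|^2\le0$. For $\eta>0$ I would divide by $\eta$ and rearrange to obtain $\eta\le 2(\|\mathbf w\|^2+\beta\phi)/(\lambda_w\|\mathbf d\|^2)$, which is exactly the bound $\eta\le\eta_w$ in \eqref{obj_eta} upon identifying $\|\mathbf d\|^2=\|(\nabla L)_{\tilde{\mathbf x}_r}\|^2+\|(\nabla L)_{\tilde{\mathbf y}_m}\|^2$. What remains is to ensure this upper bound is non-negative, so that the admissible interval $[0,\eta_w]$ is non-empty; this holds iff $\|\mathbf w\|^2+\beta\phi\ge0$. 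A short case split on the sign of $\phi$ recovers \eqref{obj_beta}: when $\phi\ge0$ any $\beta\ge0$ works, and when $\phi<0$ the stated bound $\beta\le-\|\mathbf w\|^2/\phi$ is equivalent, after multiplying by the negative quantity $\phi$, to $\beta\phi\ge-\|\mathbf w\|^2$.

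I expect the only delicate part to be the block-wise bookkeeping in the inner-product computation, namely tracking the sums over $r$ and $m$ so that the penalty contributions assemble into precisely $\phi$ and not a mismatched expression. Everything else is a routine invocation of the descent lemma together with an elementary sign argument; conceptually, the proof simply reflects that moving along $-\nabla L$ changes $W$ at first order by $-\eta(\|\mathbf w\|^2+\beta\phi)$, and the conditions on $\eta$ and $\beta$ guarantee this dominates the second-order Lipschitz remainder.
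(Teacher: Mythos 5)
Your proposal is correct and follows essentially the same route as the paper's proof: apply the descent lemma furnished by the $\lambda_w$-Lipschitz gradient of $W$, expand the inner product $\mathbf w^T\nabla L$ into $\|\mathbf w\|^2+\beta\phi$, and observe that \eqref{obj_eta} and \eqref{obj_beta} make the resulting quadratic bound non-positive. Your explicit remark that \eqref{obj_beta} is precisely the condition ensuring $\eta_w\ge 0$ (so the admissible interval is non-empty) is a point the paper leaves implicit, but the argument is the same.
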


\begin{proof}
See Appendix \ref{Appendix_prop1}.
\end{proof}

\begin{Proposition}
\label{prop2}
The caching constraint in RSU, i.e., $P_r\left(\tilde{\mathbf x}_r\right)>0$, does not increase after one gradient descent update, i.e. $
P_r\left(\tilde{\mathbf x}_r-\eta\left(\nabla L\right)_{\mathbf{\tilde{x}}_r}\right)<P_r\left(\tilde{\mathbf x}_r\right)$
if the sufficient condition holds:

\begin{subequations}
	\begin{align}
		& 0\leq\eta\leq\frac{2\left(\mathbf w_{\mathbf{\tilde{x}}_r}^T\mathbf p_r + \beta \mathrm{ReLU}\left[P_r\left(\tilde{\mathbf x}_r\right)\right] \|\mathbf p_r\|^2\right)}{\lambda_r\|\left(\nabla L\right)_{\mathbf{\tilde{x}}_r}\|^2}\triangleq\eta_r^1,\label{RSU_eta}\\
		& \beta\geq-\frac{\mathbf w_{\mathbf{\tilde{x}}_r}^T\mathbf p_r}{\mathrm{ReLU}\left[P_r\left(\tilde{\mathbf x}_r\right)\right] \|\mathbf p_r\|^2}\triangleq\beta_r^{\textrm{RSU}}\label{RSU_beta}.
	\end{align}
\end{subequations}

On the other hand, the caching constraint in RSU, i.e., $P_r\left(\tilde{\mathbf x}_r\right)\leq0$, still holds after one gradient descent update, i.e. $
P_r\left(\tilde{\mathbf x}_r-\eta\mathbf w_{\mathbf{\tilde{x}}_r}\right)\leq 0$ and $\mathbf w_{\mathbf{\tilde{x}}_r} \neq \mathbf 0$, if the following sufficient condition holds:
\begin{equation}
\label{RSU_eta_beta}
\begin{aligned}
& 0 \leq \eta \leq \frac{\mathbf w_{\mathbf{\tilde{x}}_r}^T\mathbf p_r+\sqrt{\left(\mathbf w_{\mathbf{\tilde{x}}_r}^T\mathbf p_r\right)^2-2\lambda_r\|\mathbf w_{\mathbf{\tilde{x}}_r}\|^2P_r\left(\tilde{\mathbf x}_r\right)}}{\lambda_r\|\mathbf w_{\mathbf{\tilde{x}}_r}\|^2}\triangleq\eta_r^2, \\
& \beta = 0.
\end{aligned}
\end{equation}
\end{Proposition}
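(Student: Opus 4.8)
The plan is to lean on the quadratic upper bound (descent lemma) that follows from the Lipschitz continuity of $\nabla P_r$ established in Proposition \ref{prop0}, and then, in each of the two regimes, reduce the claimed inequality to an elementary condition on the stepsize $\eta$. Concretely, since $\nabla P_r$ is $\lambda_r$-Lipschitz, for any descent direction $\mathbf{d}$ one has
\[
P_r\left(\tilde{\mathbf{x}}_r - \eta\mathbf{d}\right) \leq P_r\left(\tilde{\mathbf{x}}_r\right) - \eta\,\mathbf{p}_r^{T}\mathbf{d} + \frac{\lambda_r}{2}\eta^{2}\|\mathbf{d}\|^{2},
\]
where $\mathbf{p}_r = \nabla P_r\left(\tilde{\mathbf{x}}_r\right)$. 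Everything then hinges on controlling this quadratic-in-$\eta$ right-hand side, so the whole argument is really about when a parabola in $\eta$ is negative or stays non-positive.

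For the first claim (the overrun regime $P_r(\tilde{\mathbf{x}}_r)>0$), I would instantiate the bound with $\mathbf{d} = \left(\nabla L\right)_{\tilde{\mathbf{x}}_r} = \mathbf{w}_{\tilde{\mathbf{x}}_r} + \beta\,\mathrm{ReLU}\left[P_r(\tilde{\mathbf{x}}_r)\right]\mathbf{p}_r$, so that the directional derivative collapses to $\mathbf{p}_r^{T}\left(\nabla L\right)_{\tilde{\mathbf{x}}_r} = \mathbf{w}_{\tilde{\mathbf{x}}_r}^{T}\mathbf{p}_r + \beta\,\mathrm{ReLU}\left[P_r(\tilde{\mathbf{x}}_r)\right]\|\mathbf{p}_r\|^{2}$. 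A strict decrease is obtained by forcing the linear term to outweigh the quadratic one: for $\eta>0$ the bound drops strictly below $P_r(\tilde{\mathbf{x}}_r)$ exactly when $\eta < 2\,\mathbf{p}_r^{T}\left(\nabla L\right)_{\tilde{\mathbf{x}}_r}\big/\big(\lambda_r\|\left(\nabla L\right)_{\tilde{\mathbf{x}}_r}\|^{2}\big) = \eta_r^{1}$. This interval is non-empty only if the directional derivative is positive, and solving $\mathbf{w}_{\tilde{\mathbf{x}}_r}^{T}\mathbf{p}_r + \beta\,\mathrm{ReLU}\left[P_r(\tilde{\mathbf{x}}_r)\right]\|\mathbf{p}_r\|^{2} > 0$ for $\beta$ yields precisely the threshold $\beta_r^{\mathrm{RSU}}$ in \eqref{RSU_beta}; thus the two stated conditions are exactly what make $\eta_r^{1}>0$ and guarantee the descent (the interval endpoints merely give equality, which is harmless).

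For the second claim (the feasible regime $P_r(\tilde{\mathbf{x}}_r)\leq 0$ with $\beta=0$), the direction is $\mathbf{d}=\mathbf{w}_{\tilde{\mathbf{x}}_r}$ and the upper bound is the parabola $g(\eta)\triangleq \frac{\lambda_r}{2}\|\mathbf{w}_{\tilde{\mathbf{x}}_r}\|^{2}\eta^{2} - \left(\mathbf{w}_{\tilde{\mathbf{x}}_r}^{T}\mathbf{p}_r\right)\eta + P_r(\tilde{\mathbf{x}}_r)$. I would preserve feasibility by showing $g(\eta)\leq 0$ on the stated range. Because $\mathbf{w}_{\tilde{\mathbf{x}}_r}\neq\mathbf{0}$ the leading coefficient is positive, and because the constant term $P_r(\tilde{\mathbf{x}}_r)\leq 0$ the discriminant $\left(\mathbf{w}_{\tilde{\mathbf{x}}_r}^{T}\mathbf{p}_r\right)^{2} - 2\lambda_r\|\mathbf{w}_{\tilde{\mathbf{x}}_r}\|^{2}P_r(\tilde{\mathbf{x}}_r)$ is automatically non-negative, so $g$ has two real roots whose product $2P_r(\tilde{\mathbf{x}}_r)\big/\big(\lambda_r\|\mathbf{w}_{\tilde{\mathbf{x}}_r}\|^{2}\big)$ is non-positive; hence one root is $\leq 0$ and the other is exactly the positive quantity $\eta_r^{2}$ in \eqref{RSU_eta_beta}. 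Since an upward parabola is non-positive precisely between its roots, every $\eta\in[0,\eta_r^{2}]$ gives $g(\eta)\leq 0$, and therefore $P_r\left(\tilde{\mathbf{x}}_r - \eta\mathbf{w}_{\tilde{\mathbf{x}}_r}\right)\leq 0$.

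The computations themselves are benign; the one point needing care is the sign bookkeeping that keeps the two regimes consistent. In the feasible regime the crucial step is arguing that the smaller root of $g$ is non-positive, so that the \emph{entire} interval $[0,\eta_r^{2}]$, rather than a sub-interval bounded away from $0$, preserves the constraint; this is exactly where the hypothesis $P_r(\tilde{\mathbf{x}}_r)\leq 0$ enters, via the sign of the constant term. In the overrun regime the analogous subtlety is recognizing that $\beta\geq\beta_r^{\mathrm{RSU}}$ is precisely the condition guaranteeing that the constraint gradient $\mathbf{p}_r$ and the penalized descent direction $\left(\nabla L\right)_{\tilde{\mathbf{x}}_r}$ are positively correlated, so that moving along $-\left(\nabla L\right)_{\tilde{\mathbf{x}}_r}$ actually reduces $P_r$.
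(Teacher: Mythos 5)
Your proposal is correct and follows essentially the same route as the paper's proof: both apply the quadratic upper bound from the $\lambda_r$-Lipschitz continuity of $\nabla P_r$ along the directions $\left(\nabla L\right)_{\mathbf{\tilde{x}}_r}$ and $\mathbf w_{\mathbf{\tilde{x}}_r}$, and then read off the stepsize conditions from the resulting parabola in $\eta$. Your root-sign analysis in the feasible regime (non-negative discriminant, non-positive product of roots) simply makes explicit what the paper leaves as ``the desired condition of $\eta$ in \eqref{RSU_eta_beta} can be derived.''
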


\begin{proof}
See Appendix \ref{Appendix_prop2}.
\end{proof}

Similarly, to decrease $Q_m\left(\tilde{\mathbf y}_m\right)$ when the constraint in MBS is violated, $\beta$ and $\eta$ should be adjusted as follows
\begin{subequations}
\begin{align}
	& 0\leq\eta\leq\frac{2\left(\mathbf w_{\mathbf{\tilde{y}}_m}^T\mathbf q_m + \beta \mathrm{ReLU}\left[Q_m\left(\tilde{\mathbf y}_m\right)\right] \|\mathbf q_m\|^2\right)}{\lambda_r\|\left(\nabla L\right)_{\mathbf{\tilde{y}}_m}\|^2}\triangleq\eta_m^1,\label{MBS_eta}\\
	& \beta \geq -\frac{\mathbf w_{\mathbf{\tilde{y}}_m}^T\mathbf q_m}{\textrm{ReLU}\left[Q_m\left(\tilde{\mathbf y}_m\right)\right] \|\mathbf q_m\|^2}\triangleq\beta_m^{\textrm{MBS}}\label{MBS_beta}.
\end{align}
\end{subequations}
When $Q_m\left(\tilde{\mathbf y}_m\right)\leq0$, the caching constraint still holds after one gradient descent update, i.e. $Q_m\left(\tilde{\mathbf y}_m-\eta\mathbf w_{\mathbf{\tilde{y}}_m}\right)\leq 0$, if $\eta$ and $\beta$ satisfy the following constraints: 
\begin{equation}
\label{MBS_eta_beta}
\begin{aligned}
	&0 \leq \eta \leq \frac{\mathbf w_{\mathbf{\tilde{y}}_m}^T\mathbf q_m+\sqrt{\left(\mathbf w_{\mathbf{\tilde{y}}_m}^T\mathbf q_m\right)^2-2\lambda_r\|\mathbf w_{\mathbf{\tilde{y}}_m}\|^2Q_m\left(\tilde{\mathbf y}_m\right)}}{\lambda_m\|\mathbf w_{\mathbf{\tilde{y}}_m}\|^2}\\
 &\qquad\triangleq\eta_m^2,\\
	&  \beta = 0.
\end{aligned}
\end{equation}

Note that if all constraints are satisfied, $\beta$ and $\eta$ should be set as follows  to decrease the objective function: 
\begin{equation}
\label{eta_beta_final1}
\begin{aligned}
	\beta = 0,\quad
	0\leq\eta\leq\min\left\{\eta_r^2, \eta_m^2\right\}.
\end{aligned}
\end{equation}

If all of \eqref{obj_beta}, \eqref{RSU_beta}, and \eqref{MBS_beta} can be achieved, while other constraints are violated, the best outcome can be obtained by setting $\beta$ and $\eta$ as follows to decrease the objective function and constraints at the same time: 

\begin{subequations}
\label{eta_beta_final2}
\begin{equation}
\beta\left\{
\begin{aligned}
& \geq \max\left\{0, \beta_r^\textrm{RSU}, \beta_m^\textrm{MBS}\right\}, & \phi\geq 0,\\
& \in \left[\max\left\{\beta_r^\textrm{RSU}, \beta_m^\textrm{MBS}\right\}, -\frac{\|\mathbf w\|^2}{\phi}\right], &\\
&\qquad\quad\phi< 0 \& -\frac{\|\mathbf w\|^2}{\phi}>\beta_r^\textrm{RSU}, \beta_m^\textrm{MBS},
\end{aligned}\right.
\end{equation}

\begin{equation}
	\begin{aligned}
		0\leq\eta\leq\min\left\{\eta_w, \eta_r^1, \eta_m^1\right\},
	\end{aligned}
\end{equation}
\end{subequations}

However, the non-increase in the objective function and the decrease in the constraints may not be fulfilled at the same time. Considering these conditions with some violated constraints, after one gradient descent update, $\beta$ and $\eta$ should be set as follows to satisfy the constraints:
\begin{equation}
\label{eta_beta_final3}
\begin{aligned}
\beta&\geq \max\left\{\beta_r^\textrm{RSU} + \epsilon_r, \beta_m^\textrm{MBS} + \epsilon_m\right\},\\
\eta&>0 \textrm{ is sufficiently small},
\end{aligned}
\end{equation}
where $\epsilon_r$ and $\epsilon_m$ are positive values that can analytically decided by the following Prop. \ref{prop3}. 

\begin{Proposition}
\label{prop3}
There will be no cost overrun, i.e. $P_r\left(\tilde{\mathbf x}_r\right)>0, Q_m\left(\tilde{\mathbf y}_m\right)>0$, after one gradient ascent update if
\begin{equation}
	\label{epsilon_r}
	\epsilon_r=\frac{1}{\eta\|\mathbf p_r\|^2}, \quad\quad\epsilon_m=\frac{1}{\eta\|\mathbf p_m\|^2}.
\end{equation}
\end{Proposition}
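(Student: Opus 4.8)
The plan is to treat the RSU constraint $P_r$ and the MBS constraint $Q_m$ separately but identically, since the penalty term in $L$ decouples across the two blocks of variables; I would carry out the argument for $P_r$ in full and read off the $Q_m$ statement by substituting $(\mathbf p_r,\lambda_r,\beta_r^{\textrm{RSU}},\epsilon_r)\mapsto(\mathbf q_m,\lambda_m,\beta_m^{\textrm{MBS}},\epsilon_m)$ (this also flags the apparent typo $\|\mathbf p_m\|\to\|\mathbf q_m\|$ in \eqref{epsilon_r}). Throughout I work in the cost-overrun regime $P_r(\tilde{\mathbf x}_r)>0$, so that $\textrm{ReLU}[P_r(\tilde{\mathbf x}_r)]=P_r(\tilde{\mathbf x}_r)$ and the descent direction is $g\triangleq(\nabla L)_{\tilde{\mathbf x}_r}=\mathbf w_{\tilde{\mathbf x}_r}+\beta P_r(\tilde{\mathbf x}_r)\mathbf p_r$.

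First I would bound the updated constraint value using the Lipschitz-gradient property from Proposition \ref{prop0}. The standard quadratic upper bound implied by $\lambda_r$-Lipschitz continuity of $\nabla P_r=\mathbf p_r$, evaluated at $\tilde{\mathbf x}_r-\eta g$, gives
\begin{equation*}
P_r\left(\tilde{\mathbf x}_r-\eta g\right)\leq P_r\left(\tilde{\mathbf x}_r\right)-\eta\,\mathbf p_r^T g+\frac{\lambda_r}{2}\eta^2\|g\|^2 .
\end{equation*}
The crux is then to verify that the prescribed $\epsilon_r$ calibrates the linear term to cancel $P_r(\tilde{\mathbf x}_r)$ exactly. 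I would substitute $\mathbf p_r^T g=\mathbf w_{\tilde{\mathbf x}_r}^T\mathbf p_r+\beta P_r\|\mathbf p_r\|^2$ and invoke the hypothesis $\beta\geq\beta_r^{\textrm{RSU}}+\epsilon_r$ together with $\beta_r^{\textrm{RSU}}=-\mathbf w_{\tilde{\mathbf x}_r}^T\mathbf p_r/(P_r\|\mathbf p_r\|^2)$ and $\epsilon_r=1/(\eta\|\mathbf p_r\|^2)$. A direct computation cancels the $\mathbf w_{\tilde{\mathbf x}_r}^T\mathbf p_r$ contributions and the extra penalty mass supplied by $\epsilon_r$ yields precisely $\eta\,\mathbf p_r^T g\geq P_r(\tilde{\mathbf x}_r)$; i.e. $\epsilon_r$ is tuned so that the first-order decrease of the constraint equals its current violation.

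Feeding this back into the bound sends the linear part to $P_r-\eta\,\mathbf p_r^T g\leq 0$, so $P_r(\tilde{\mathbf x}_r-\eta g)$ is pushed to the boundary of feasibility up to the $O(\eta^2)$ remainder. This is exactly the point at which the ``$\eta$ sufficiently small'' clause of \eqref{eta_beta_final3} must enter: I would retain a little slack (a strict inequality $\beta>\beta_r^{\textrm{RSU}}+\epsilon_r$), so that the resulting strictly negative linear decrease dominates the nonnegative curvature term $\tfrac{\lambda_r}{2}\eta^2\|g\|^2$ for all small enough $\eta$, delivering $P_r(\tilde{\mathbf x}_r-\eta g)\leq 0$ and hence no overrun. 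The $Q_m$ conclusion then follows verbatim with $\mathbf q_m$ replacing $\mathbf p_r$.

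I expect the main obstacle to be reconciling the \emph{exact} cancellation with the second-order remainder. Because $\epsilon_r$ is designed to make the linearized decrease equal $P_r$ on the nose, the purely first-order argument only certifies $P_r^{\textrm{new}}\approx 0$, and the nonnegative $O(\eta^2)$ curvature term threatens to reintroduce a small violation; the rigorous step is therefore to couple the smallness of $\eta$ to the strict margin in $\beta$, using Proposition \ref{prop0} to keep $\|g\|$ bounded so the remainder cannot blow up. A secondary difficulty is that the statement is loosely worded—``gradient ascent'' appears to be a slip for the descent update \eqref{gradient update}, and the inequalities $P_r>0,\,Q_m>0$ describe the \emph{pre}-update overrun rather than the conclusion—so part of the work is simply fixing the intended claim to be $P_r(\tilde{\mathbf x}_r-\eta g)\leq 0$ and $Q_m(\tilde{\mathbf y}_m-\eta g)\leq 0$.
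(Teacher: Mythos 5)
Your proposal matches the paper's proof in essence: the paper likewise linearizes $P_r$ at $\tilde{\mathbf x}_r$, substitutes $(\nabla L)_{\tilde{\mathbf x}_r}=\mathbf w_{\tilde{\mathbf x}_r}+\beta\,\mathrm{ReLU}\left[P_r\left(\tilde{\mathbf x}_r\right)\right]\mathbf p_r$, and solves $P_r\left(\tilde{\mathbf x}_r\right)-\eta\,\mathbf p_r^T(\nabla L)_{\tilde{\mathbf x}_r}\leq 0$ to obtain $\beta\geq \beta_r^{\textrm{RSU}}+1/(\eta\|\mathbf p_r\|^2)$, relegating the higher-order remainder to the same ``sufficiently small $\eta$'' clause you invoke (your descent-lemma bound versus the paper's first-order Taylor expansion with an $o(\cdot)$ term is an immaterial difference). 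Your reading of the statement is also the intended one: the update is the descent step \eqref{gradient update}, the conclusion is that $P_r\leq 0$ and $Q_m\leq 0$ hold after the step, and $\|\mathbf p_m\|$ in \eqref{epsilon_r} should indeed be $\|\mathbf q_m\|$.
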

\begin{proof}
See Appendix \ref{Appendix_prop3}.
\end{proof}

Finally, We present the convergence analysis and computational complexity analysis on our proposed adaptive gradient descent algorithm.

\textit{Convergence Analysis:} The convergence is guaranteed by the following two facts. First, the objective value of Problem \eqref{relaxation} is non-increasing over iterations, even with arbitrarily initialization, all the constraints in Problem \eqref{relaxation} can be satisfied by tuning $\beta$ and $\eta$. Second, the optimal value of problem \eqref{relaxation} is bounded from below due to the cache constraint. Thus, the objective value is guaranteed to converge. Furthermore, since each binary decision variable is bounded, thus there must exist a convergent subsequence.

\textit{Complexity Analysis:} The complexity of updating $\mathbf x$ and $\mathbf y$ mainly depends on the computation of $\beta$ and the optimization of (24) with gradient descent method. Firstly, it is necessary to compute the penalty coefficient $\beta$ with gradient computation and matrix multiplication according to (30b) and (32b), whose complexity is $\mathcal O\left(F(R+M)\right)$. Secondly, we need to optimize the extended objective function with gradient descent method, whose complexity is $\mathcal O\left(F(R+M)\right)$. Suppose that the proposed algorithm requires $T$ iterations to converge in total. Therefore, the complexity of evaluating $\mathbf x$ and $\mathbf y$ is $\mathcal O(TF(R+M))$.

\subsection{Practical Adaptive Gradient Descent Algorithm}
Although the convergence can be guaranteed as stated in the last subsection, some parameters are difficult to calculate accurately, especially for all the Lipschitz constants. If the stepsize is too small, the convergence rate will be very slow, which is unadaptive to the characteristics of rapid changes in the vehicular networks. In this section, we propose a practical scheme to optimize cooperative cache problem. Eqs. \eqref{eta_beta_final1}, \eqref{eta_beta_final2}, and \eqref{eta_beta_final3} offer a fresh insight into the update of $\beta$, and we give one realization as follows: 

\begin{equation}
\label{beta_final}
\beta=\left\{
\begin{aligned}
&\max\left\{0, \beta_r^\textrm{RSU}, \beta_m^\textrm{MBS}\right\}, &\quad \phi\geq 0,\\
&-\frac{\|\mathbf w\|^2}{2\phi}+\frac{1}{2}\max\left\{\beta_r^\textrm{RSU}, \beta_m^\textrm{MBS}\right\}, &\\
&\quad\quad\quad\phi< 0 \& -\frac{\|\mathbf w\|^2}{\phi}>\beta_r^\textrm{RSU}, \beta_m^\textrm{MBS},\\
&\max\left\{\beta_r^\textrm{RSU} + \epsilon_r, \beta_m^\textrm{MBS} + \epsilon_m\right\}, &\textrm{otherwise}.
\end{aligned}\right.
\end{equation}

Besides, for the purpose of reducing computational complexity and accelerating convergence, we take a slightly larger constant stepsize $\eta$. Note that when there is only one constraint, our realization of $\beta$ is same as the scheme in \cite[Eq. 6]{0-1 KP}.

\subsection{Dynamic Content Caching}
For adapting to the dynamic topology and the real-time request, dynamic content caching mainly integrates the proposed adaptive gradient descent-based caching policy with the trajectory prediction and content popularity predictions. The overall algorithm is outlined in Algorithm \ref{Dynamic Caching Algorithm}. Generally speaking, prediction and caching are executed periodically. In Step 2, RSUs first recognize all vehicles in their coverage, and then predict the future trajectory with the proposed algorithm in Section \ref{Trajectory Prediction Scheme}. In Step 3, MBSs and RSUs collaboratively execute the HFL-based SASRec algorithm proposed in Section \ref{Recommendation System Scheme} to predict the future content requests for each vehicles. Step 4 combines trajectory prediction with content prediction to get the content popularity for each RSU. Steps 5-11 optimize the target caching problem by relaxing it to an approximate problem. Specifically, Steps 6-7 select a suitable $\beta$ to balance all the descent directions; Steps 8-9 compute gradient and update the caching. Finally, Step 12 aims to cache contents for each RSU and MBS according to the caching decisions.

\begin{algorithm}
\caption{Dynamic Caching Algorithm}
\label{Dynamic Caching Algorithm}
{\bf Input:} Stepsize $\eta$, initial $\beta=0$.
\begin{algorithmic}[1]
	\FOR{episode=1, 2, ...}
	\STATE \underline{\textit{Prediction}}: Predict the time of each vehicle entering each RSU $\mathbf P_1$ in Eq. \eqref{TrajectoryPrediction}.
	
	\STATE \underline{\textit{Prediction}}: Predict the probability of each vehicle requesting each content $\mathbf P_2$ in Alg. \ref{ContentPrediction}.
	
	\STATE Compute the content popularity of each RSU, $\mathbf P=\mathbf P_1^T\mathbf P_2$.
	
	\REPEAT
	\STATE Compute $\beta_r^\textrm{RSU}, \beta_m^\textrm{MBS}$ with \eqref{RSU_beta}, \eqref{MBS_beta} when constraints have been violated.
	
	\STATE Compute $\beta$ with \eqref{beta_final}.
	
	\STATE Compute gradient $\nabla L$ of extend objective function in \eqref{extended objective function} with given $\beta$.
	
	\STATE Update the optimization variables $\tilde{\mathbf x}, \tilde{\mathbf y}$ with \eqref{gradient update}.
	\UNTIL{Convergence}
	
	\STATE \underline{\textit{Caching}}: Sort $\tilde{\mathbf x}, \tilde{\mathbf y}$, and cache contents in turn until the constraint is violated.
	\ENDFOR
\end{algorithmic}
\end{algorithm}

\section{Results and Discussions}
\label{Simulation}
\subsection{Simulation Setup}
In this section, we numerically evaluate the performance of the proposed proactive content caching scheme. In this simulation, there are 1107 vehicles served by the hierarchical cooperative caching network, the cached contents of RSUs and MBSs are determined based on the prediction of the vehicle mobility and content preference. As shown in Fig. \ref{simulation_setup}, the map is roughly $3km\times 3km$ range in Shenzhen. For the purpose of covering the main streets entirely, the urban area is divided into 8 clusters, each of which consists of 1 MBS and 10 RSUs. The mobility trajectory over the road network is generated by SUMO simulator to imitate behaviors of vehicles \cite{SUMO}. The dataset of the recommendation system used in our experiments is MovieLens 1M dataset collected from the MovieLens website \cite{Movielens}. About 1 million ratings are contained in this dataset, which came from 6040 anonymized users on 3416 movies. To simulate the process of vehicles' requests, the rated movies are assumed as request contents from vehicles. The system parameters used for our simulations are listed in Table \ref{Simulation Parameters}.

\begin{figure}[!htb]
\begin{minipage}[b]{0.5\textwidth}
	\centering
	\includegraphics[scale=0.12]{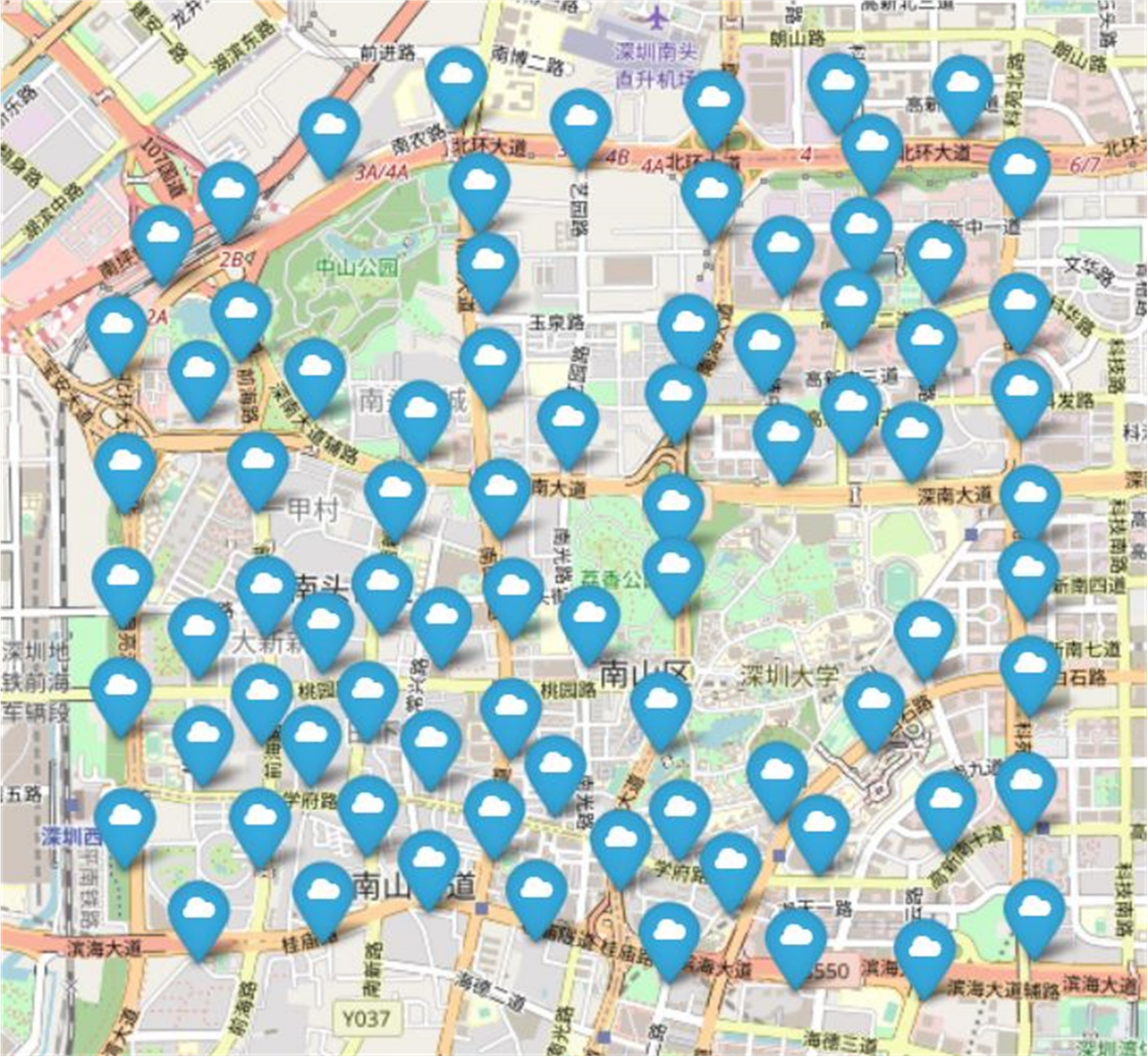}
	\caption{Simulation setup of RSUs placement in Shenzhen.}
	\label{simulation_setup}
\end{minipage}
\begin{minipage}[b]{0.5\textwidth}
	\centering
	\scriptsize
	\begin{tabular}{ |c|c| }
		\hline  
		\textbf{Parameters} & \textbf{Values} \\
		\cline{1-2}
		Number of MBSs, $M$ & 8 \\
		\cline{1-2}
		Number of RSUs, $V$ & 80 \\
		\cline{1-2}
		Number of vehicles, $V$ & 1107 \\
		\cline{1-2}
		Number of available contents, $F$ & 3416\\
		\cline{1-2}
		Size of each content, $s_f$ & 1M\\ 
		\cline{1-2}
		Transmission rate of backhaul link, $\gamma_{CM}$ & 10Mbps \\
		\cline{1-2}
		Transmission rate of Fronthaul link, $\gamma_{MR}$ & 100Mbps \\
		\cline{1-2}
		Transmission rate of the link between MBSs, $\gamma_{MM}$ & 50Mbps \\
		\cline{1-2}
		Radius of coverage range of RSU & 300m \\
		\cline{1-2}
		\hline
	\end{tabular}
	\tabcaption{Simulation Parameters}
	\label{Simulation Parameters}
\end{minipage}
\end{figure}

\subsection{Baseline Schemes and Metrics}
We adopt four baseline schemes for comparison. For baseline 1, we adopt the least recently used (LRU) scheme which is a common caching strategy. In this case, RSUs and MBSs firstly remove the least recently used content in the cache when the limit of cache capacity is reached. For baseline 2, we adopt the random scheme which caches the files randomly. For baseline 3, we evaluate the noncooperative caching performance to indicate the impact of cooperation on the whole system. In this case, all the RSUs and MBSs independently determine its deployment with the prediction of vehicles' future trajectory and content popularity. For baseline 4, it takes cooperative content caching scheme given the prior knowledge of the exact future trajectory and content requests from vehicles, which can be treated as the optimal solution for all the cache schemes.

In the simulation, we mainly compare the performance of different caching strategies in terms of the hit ratio and  average delay.

\subsubsection{Hit Ratio} Hit ratio is the essential metric to evaluate the proposed scheme that measures the effectiveness of a cache decision in fulfilling content requests \cite{Zhengxin Yu}. To show the impact of the cooperative caching, one cache hit means the requested content is delivered by the cache within the cluster, whereas a cache miss means the requested content is not stored in the cluster. Hit ratio is calculated as follows:
\begin{equation}
\textrm{hit ratio} = \frac{\textrm{cache hits}}{\textrm{cache hits + cache miss}}.
\end{equation}

\subsubsection{Average Delay} The latency for each content shown in \eqref{content retrieval delay} is determined by the cache schemes, the retrieval process of HCCN and the unit delays $\gamma_{CM}, \gamma_{MR}, \gamma_{MM}$. Furthermore, as another important metric to evaluate the user experience, the average delay is defined as 
\begin{equation}
\begin{aligned}
&\textrm{average delay} \\
&= \frac{\textrm{Total prefetching delay of all requested contents} }{\textrm{Total number of requested contents} }.    
\end{aligned}
\end{equation}

In the following, we investigate the convergence of cooperative content caching in Subsection \ref{Convergence of Cooperative Content Caching} while the impact of the different system parameters, i.e. MBS size and RSU size, is studied to evaluate hit ratio in Subsection \ref{Hit Ratio Evaluation} and average delay in Subsection \ref{Average Delay Evaluation}.

\subsection{Convergence of Cooperative Content Caching}
\label{Convergence of Cooperative Content Caching}
In Fig. \ref{convergence}, we investigate the impact of different RSU size $S_r^\textrm{RSU}$ on the convergence of the proposed adaptive gradient descent method for large-scale optimization problem. Fig. \ref{convergence} sketches the number of iterations versus the objective function (delay) by considering two cases with configuration given by : i) RSU size = 140, MBS size = 800; 2) RSU size = 410, MBS size = 800. As shown in Fig. \ref{convergence}, the objective presents a tendency to decrease by adjusting the penalty parameter $\beta$ all the time in a practical adaptive gradient descent algorithm. Through subsequent simulations, it can optimize cache deployment efficiently, and thus guarantee a high hit ratio and low delay.
\vspace{-0.8cm} 
\begin{figure}[htbp]
\centering
\includegraphics[scale=0.45]{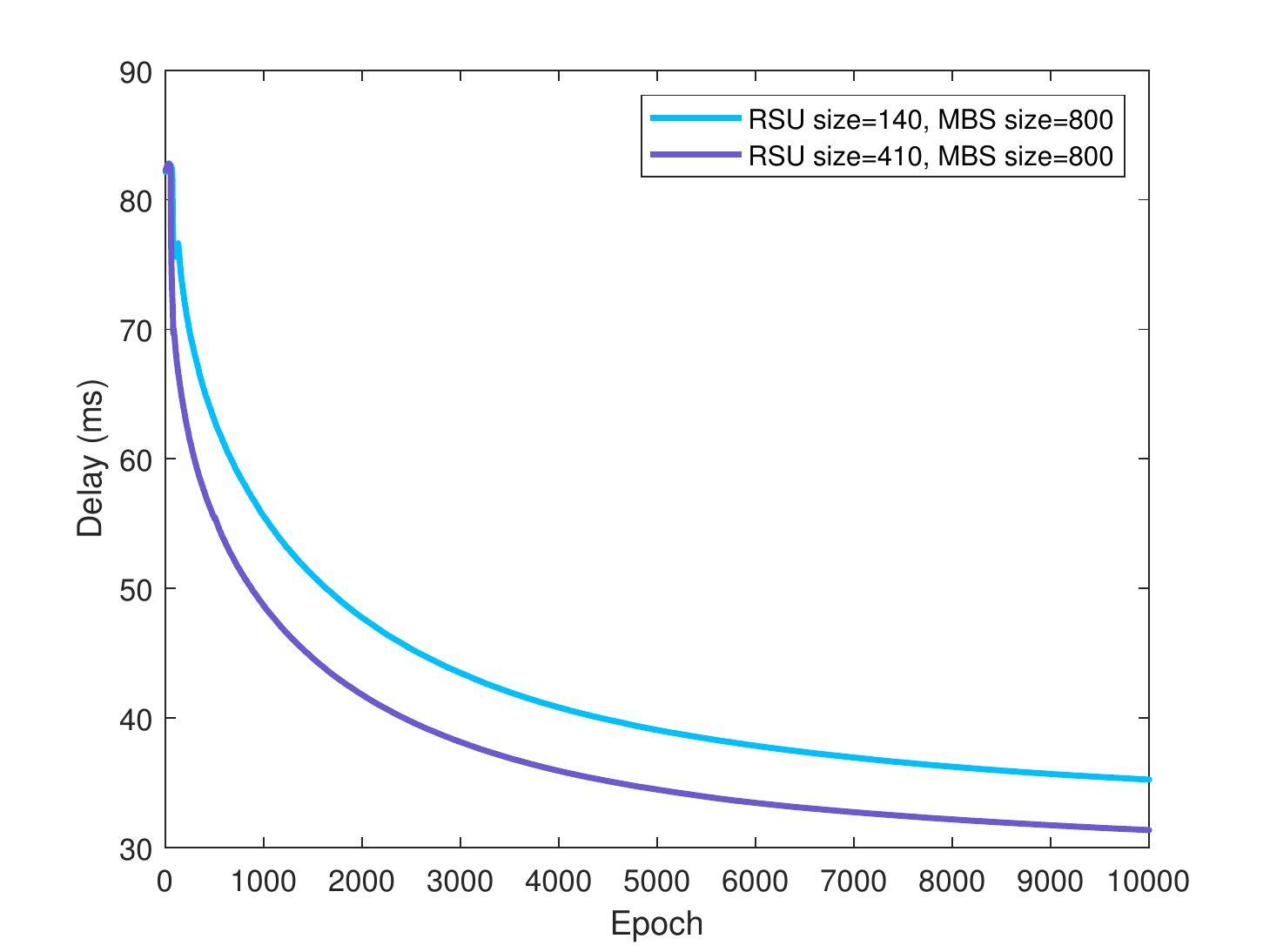}
\caption{Convergence of our adaptive gradient descent method algorithms for different values of $S_r^{RSU}$ when $S_m^{MBS}=800$.}
\label{convergence}
\end{figure}

\subsection{Accuracy of the Trajectory Prediction}
\label{Accuracy of the Prediction Method}
In Fig. \ref{Trajectory Prediction}, we compare the proposed trajectory prediction method with the classical prediction by partial matching (PPM) that conducts prediction of the next location by computing the frequency \cite{ref02_1}. As Fig. \ref{Trajectory Prediction} shows, the accuracy increases by $6\%$ on average by using our proposed method.  The performance gain can be explained as follows. The probable path can be represented by a different RSU sequence because of large coverage area overlap among nearby RSUs. The probability is set as 0 when a given RSU location sequence never occurs in the PPM model, while our method has ability to identify the importance of different location sequences efficiently even a given sequence never occurs. On the other hand, due to the influence of traffic lights, too long residue time may exceed the depth of the tree, which may cause inaccurate predictions with PPM. 
In addition,  it is difficult to obtain the optimal length of paths of the Trie structure in the PPM model in practice. In general, our proposed method can guarantee the effectiveness of trajectory prediction.

\subsection{Effectiveness of the HFL-based SASRec System}
\label{Effectiveness of the HFL-based SASRec System}
In Fig. \ref{HFL-based SASRec System}, we compare the centralized training with HFL to train the SASRec network. We adopt the commonly-used recommendation performance metric, namely, Top-$N$ hit ratio with $N=50$ and 100, which can be denoted as HR@50 and HR@100. For each vehicle, we randomly select 500 negative files, and rank these files with the ground-truth files. As Fig. \ref{HFL-based SASRec System} shows, HR@50 and HR@100 decrease by only $6\%$ and $4\%$, respectively. However, the data privacy can not be guaranteed and the communication overhead is large since the raw data needs to be sent to RSUs and MBSs in the centralized learning manner.

\begin{figure}[htbp]
\centering
\begin{minipage}[t]{0.48\textwidth}
\centering
\includegraphics[scale=0.45]{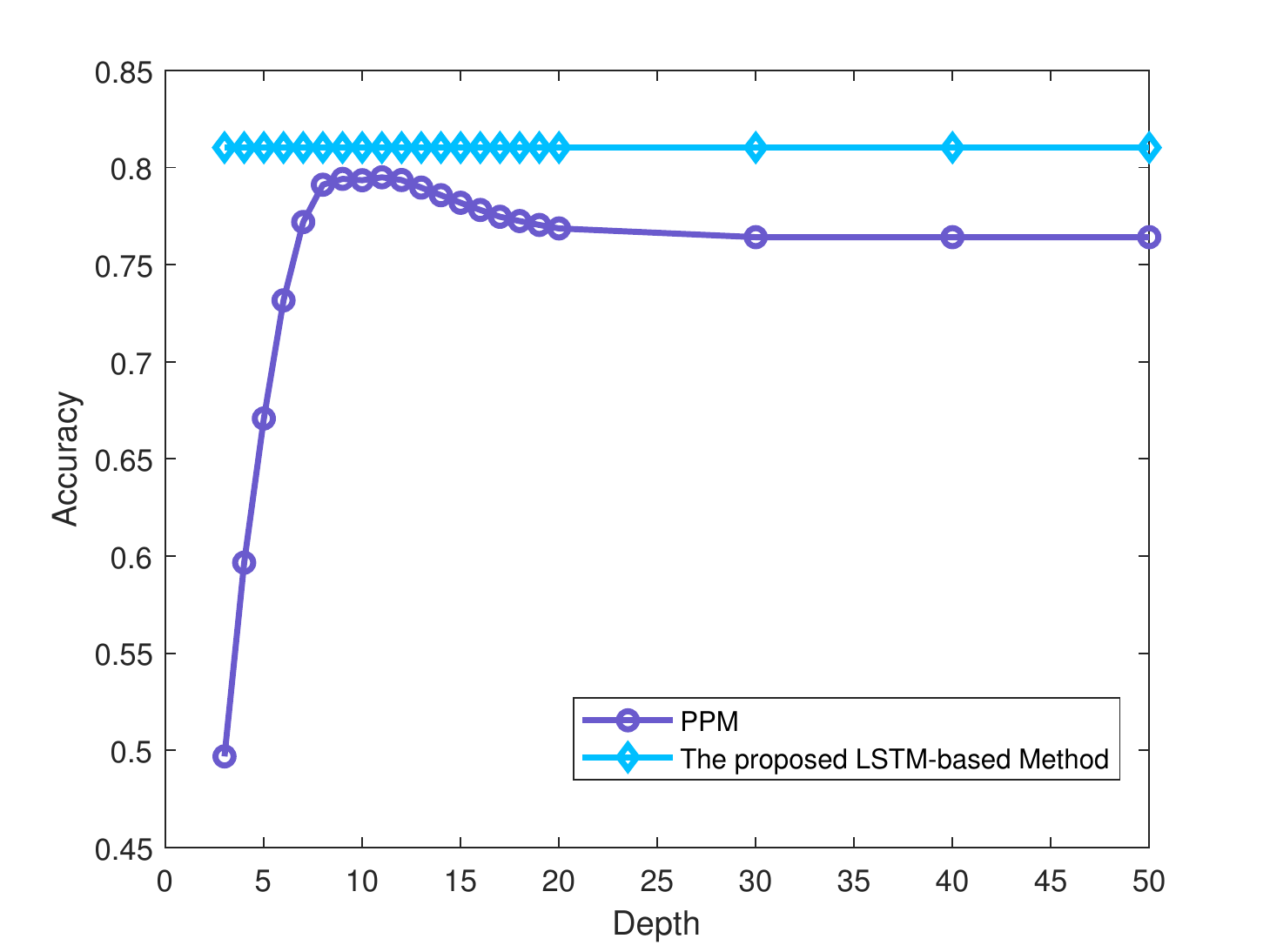}
\caption{Comparison of the proposed trajectory method with PPM.}
\label{Trajectory Prediction}
\end{minipage}
\begin{minipage}[t]{0.48\textwidth}
\centering
\includegraphics[scale=0.45]{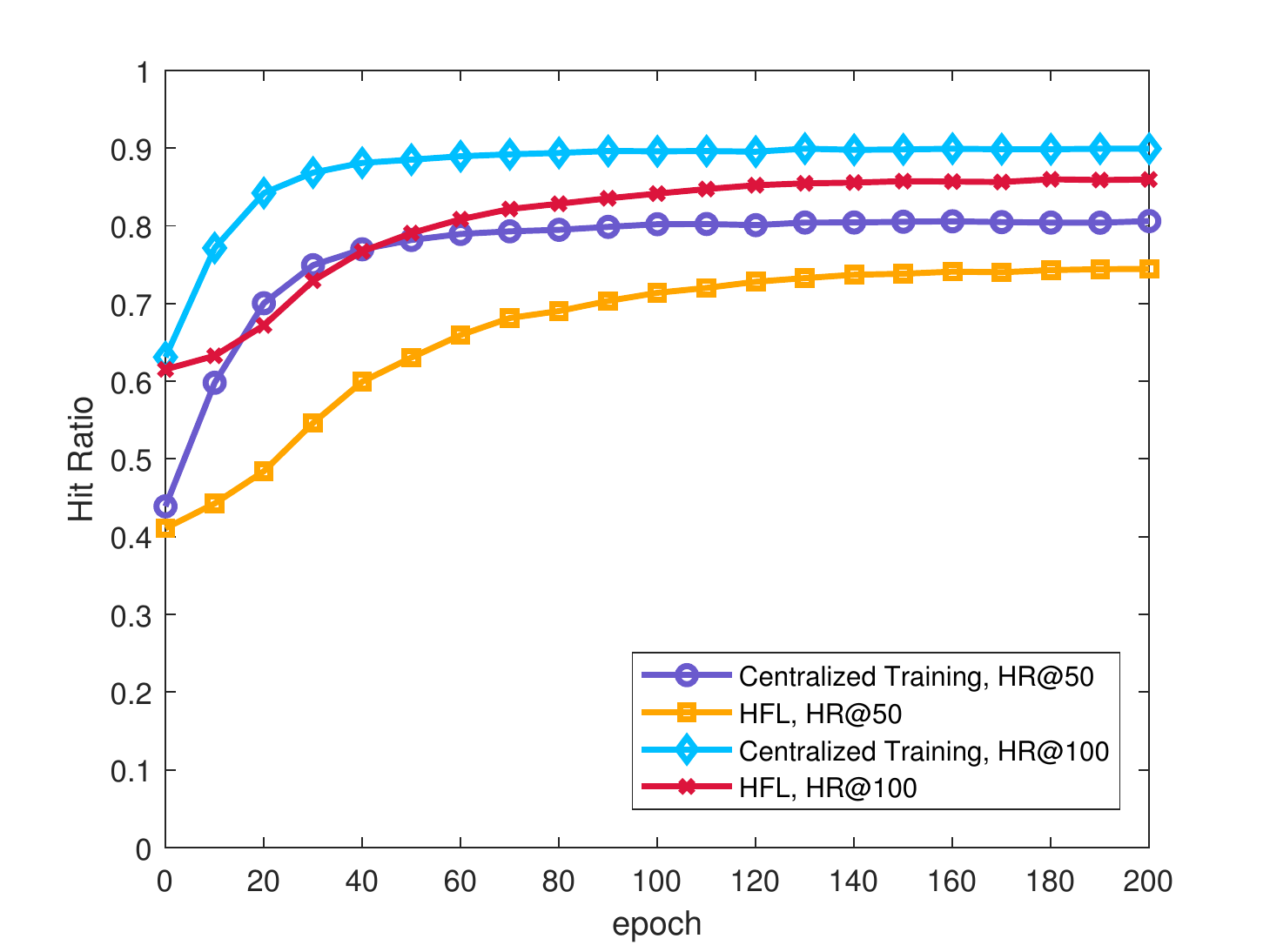}
\caption{Comparison of centralized training with HFL method. We predict the next 5 files for 100 vehicles in the coverage of a BS and 10 RSU are deployed in the edge layer.}
\label{HFL-based SASRec System}
\end{minipage}
\end{figure}

\subsection{Hit Ratio Evaluation}
\label{Hit Ratio Evaluation}
To investigate the impact of caching capacities of RSUs and MBS,  we plot Fig. \ref{Hit_Ratio}(a) to depict the cache hit ratio for varying RSU cache sizes from 230 to 350 contents given 800 contents in MBSs cache, and plot Fig. \ref{Hit_Ratio}(b) to depict the cache hit ratio for varying MBS cache sizes from 300 to 1200 contents with cached 300 contents in RSUs. The results demonstrate that our proposed algorithm outperforms the LRU, random, and non-cooperative caching schemes. With the increase of cache size, the cache hit ratios of all the caching schemes rise. As expected, the lowest cache hit ratio is presented by the classical LRU and random scheme (baseline 1 and baseline 2). The noncooperative caching scheme (baseline 2) outperforms LRU and random scheme because they extract historical trajectory features to predict the future residence time in each RSU, and extract features from the content request history of connected vehicles to predict precise content popularity. Random scheme does not consider any feature of the current environment. LRU only follows static rules without considering dynamically changing content popularity. Since the proposed cooperative caching scheme jointly optimizes the cached contents of all the RSUs, the contents are more likely to be fetched from a neighbor edge node instead of the Internet when it is not cached by the local RSU. Therefore, it can significantly improve resource utilization and show a better performance than the noncooperative caching scheme. Baseline 3 provides the best cache hit ratio since it has the prior knowledge of content requests and trajectory from vehicles in the future, and leverages the advantages of cooperative caching. Compared with LRU caching scheme, the hit ratios are increased $18.8\%$ and $14.7\%$ by using our proposed method in the cases of RSU size = 220, MBS size = 800 and RSU size = 300, MBS sizes = 550, respectively.

\begin{figure*}
\centering
\includegraphics[scale=0.5]{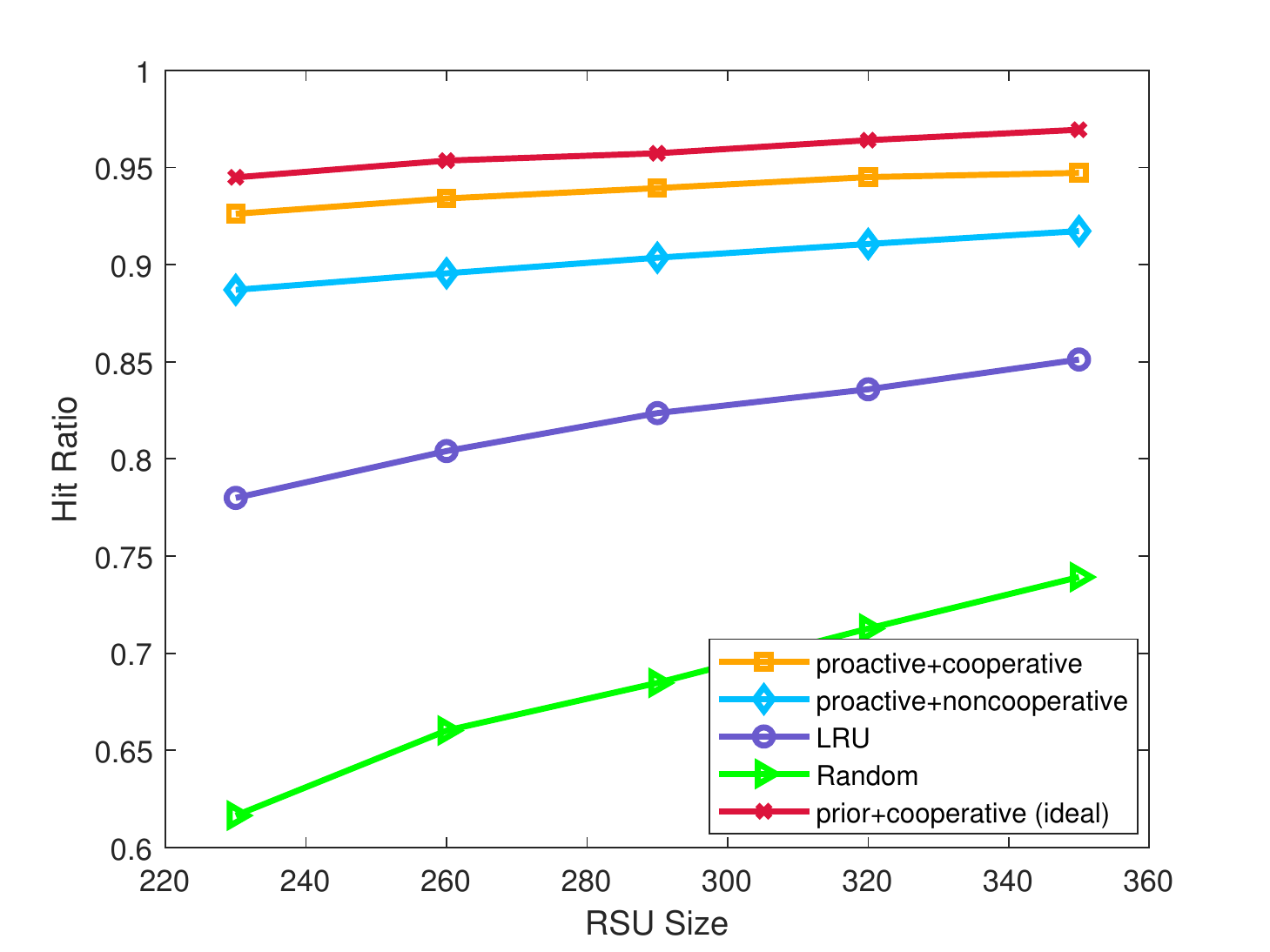}
\hfil
\includegraphics[scale=0.5]{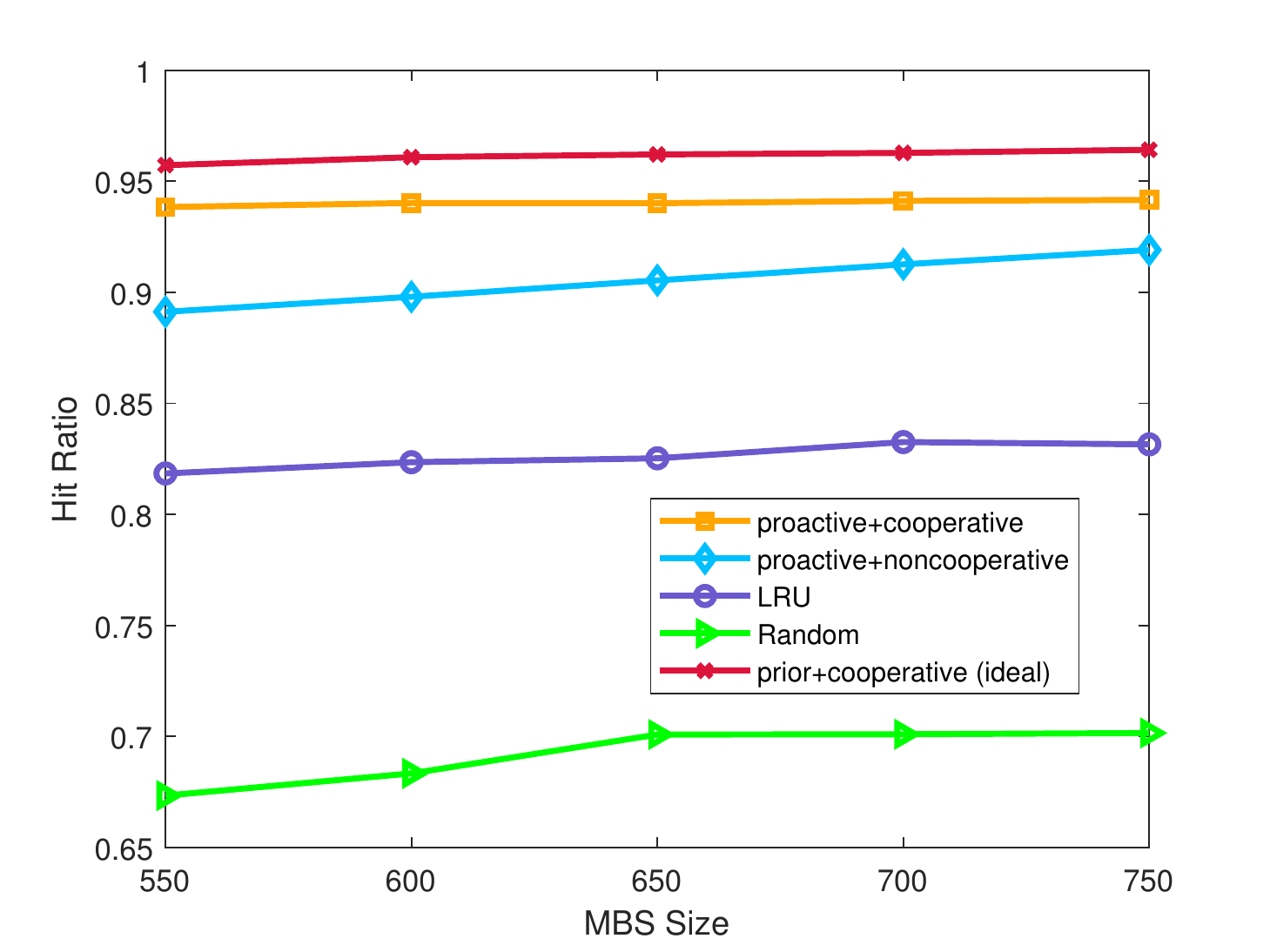}
\caption{(a) Hit ratio versus RSU size in the range 230--350 when MBS size is 800. (b) Hit ratio versus MBS size in the range 550--750 when RSU size is 300.}
\label{Hit_Ratio}
\end{figure*}

\subsection{Average Delay Evaluation}
\label{Average Delay Evaluation}
\begin{figure*}[!t]
\centering
\includegraphics[scale=0.5]{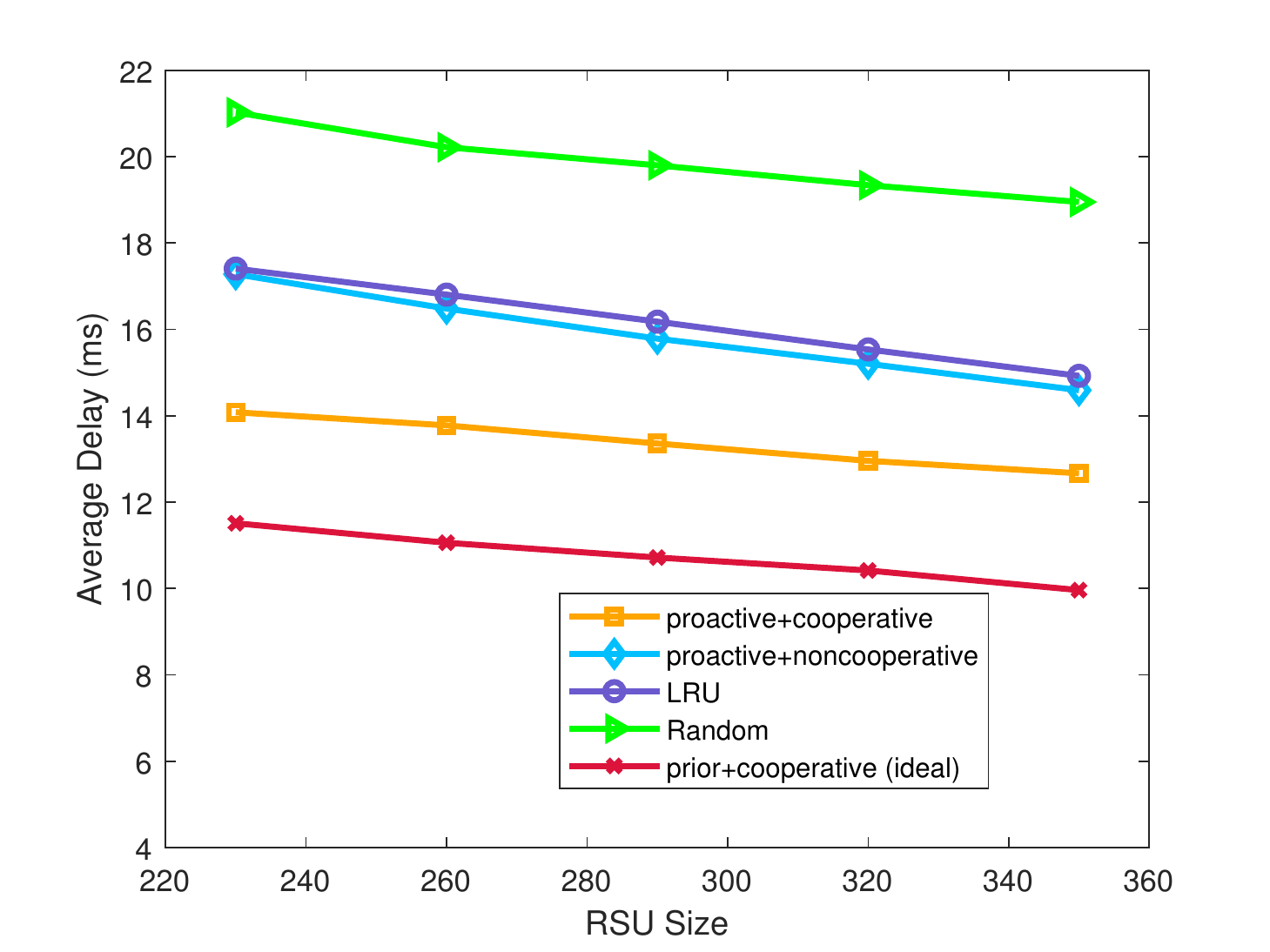} \label{Delay_average_RSU}
\hfil
\includegraphics[scale=0.5]{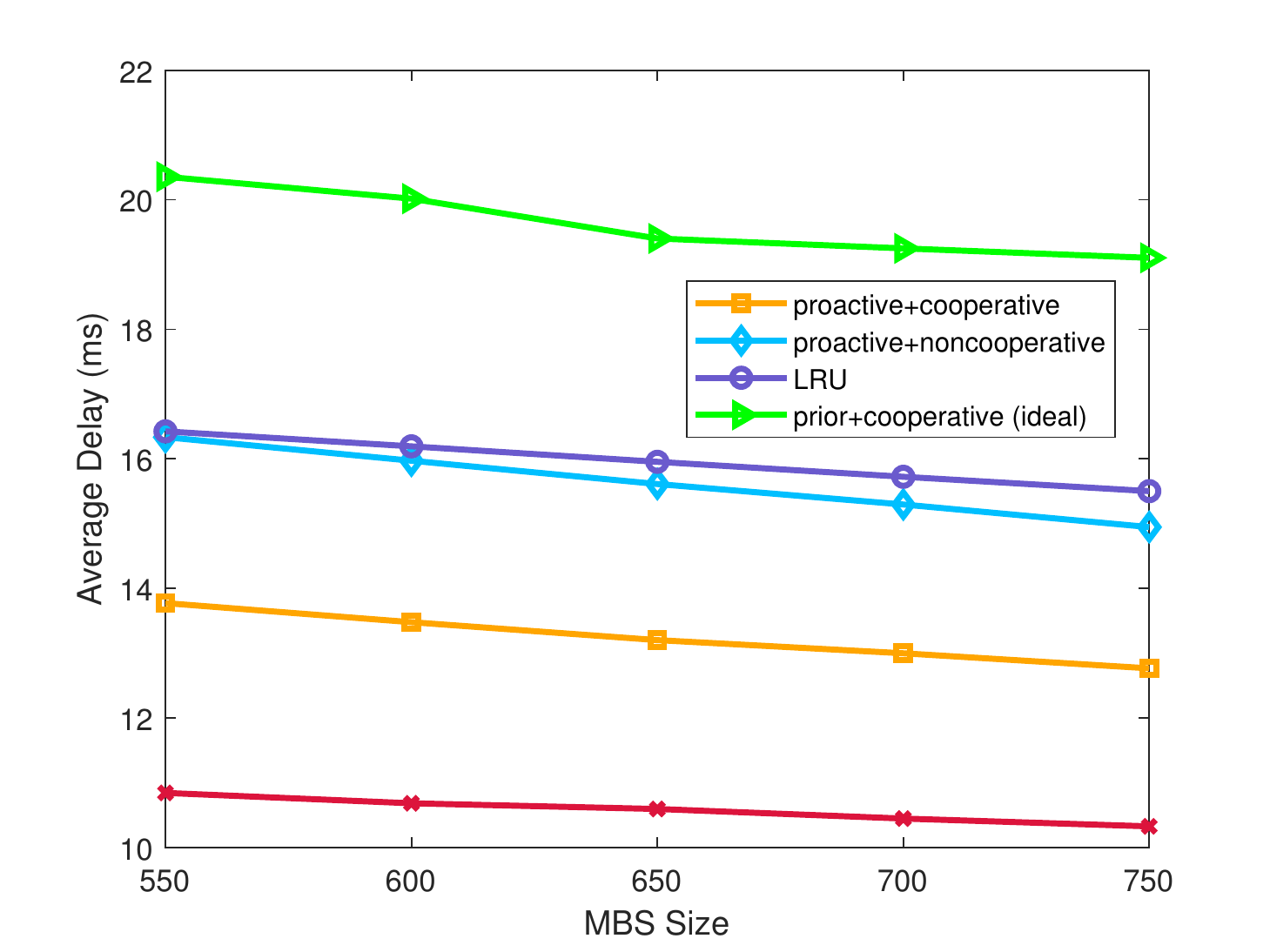} \label{Delay_average_MBS}
\caption{(a) Average delay versus RSU size in the range 230--350 when MBS size is 800. (b) Average delay versus MBS size in the range 550--750 when RSU size is 300.}
\label{Delay_average}
\end{figure*}
To further evaluate the performance of cooperative caching scheme, we plot Fig. \ref{Delay_average}(a) to depict the average delay for varying RSU cache sizes from 230 to 350 contents given 800 contents in MBSs cache, and plot  Fig. \ref{Delay_average}(b) to depict the average delay for varying MBS cache sizes from 300 to 1200 contents with cached 300 contents in RSUs. The results also demonstrate that our proposed algorithm outperforms the LRU, random, and non-cooperative caching schemes in terms of average delay. With the increase of cache size, the average delay of all the caching schemes decline. As expected, the cooperative cache with prior information performs the best, our proposed scheme is the next best, and the classical LRU and random scheme perform worse than any other schemes. The performance gain can be explained by the superiority of cooperative cache scheme. Although the noncooperative cache scheme based on prediction (baseline 2) is slightly better than LRU in terms of average latency, our proposed cooperative cache scheme gains a huger advantage over the LRU. Compared with LRU caching scheme, the average latency is reduced $19.1\%$ and $16.1\%$ by using our proposed method in the cases of RSU size = 220, MBS size = 800 and RSU size = 300, MBS sizes = 550, respectively.

\section{Conclusions}
\label{Conclusion}
In this paper, we propose an HCCN architecture to adapt to the dynamic properties of VANET topology, provide real-time content popularity prediction, and reduce communication costs. And a pipeline scheduling mechanism is utilized to parallelly execute prediction and transmission tasks. To verify the effectiveness of the proposed framework, we simulate the urban roads around Shenzhen University. We firstly make the utmost of the spatio-temporal correlation of historical trajectory data, and then design an LSTM-based model to predict the residence time in each RSU for vehicles in the near future. With the growing concern on data privacy, we propose an HFL-based structure to train the SASRec network for each cluster so as to predict future content popularity in each RSU. Finally, based on the aforementioned trajectory prediction and content popularity prediction results, we propose an adaptive gradient descent-based algorithm to solve a large-scale 0-1 constrained problem and enhance the performance of content caching. Numerical results demonstrate that our proposed cooperative caching scheme achieves a satisfactory performance close to ideal cooperative caching schemes with prior information. Furthermore, we confirm the huge potential of our proposed hierarchical cooperative caching network architecture and a pipeline scheduling mechanism in hit ratio and low latency in future stream media content caching systems.

\appendices
\section{Proof of Proposition \ref{prop0}}
\label{Appendix_prop0}
Due to the multivariate polynomial form of function $\gamma_{r,f}$ in \eqref{total content retrieval delay}, their first and second order partial derivatives are bounded for the variables in the region $[0, 1]$. Based on these bounded partial derivatives, the Hessian of $\gamma_{r,f}$ is a bounded matrix, and the the largest eigenvalue is bounded. Therefore, $\gamma_{r,f}$ has a local Lipschitz continuous gradient w.r.t $\left(\mathbf x, \mathbf y\right)$. Since Sigmoid function also has Lipchitz continuous gradient, $\gamma_{r,f}$ has local Lipschitz continuous gradient w.r.t $\left(\tilde{\mathbf x}, \tilde{\mathbf y}\right)$. Furthermore, the objective function, as a linear combination of $\gamma_{r,f}$, also has local Lipschitz continuous gradient w.r.t $\left(\tilde{\mathbf x}, \tilde{\mathbf y}\right)$. Since $P_r$ and $Q_m$ are linear functions w.r.t $\mathbf x$ and $\mathbf y$, they have local Lipschitz continuous gradients w.r.t $\mathbf x, \mathbf y$. Furthermore, following the similar steps, they have local Lipschitz continuous gradients w.r.t $\tilde{\mathbf x}, \tilde{\mathbf y}$.

\section{Proof of Proposition \ref{prop1}}
\label{Appendix_prop1}
From the fact that the gradient of $W\left(\mathbf{\tilde{x}},\mathbf{\tilde{y}}\right)$ is Lipschitz continuous, we have
\begin{align}
&W\left(\mathbf{\tilde{x}}-\eta\left(\nabla L\right)_{\mathbf{\tilde{x}}},\mathbf{\tilde{y}} -\eta\left(\nabla L\right)_{\mathbf{\tilde{y}}} \right) - W\left(\mathbf{\tilde{x}},\mathbf{\tilde{y}}\right) \notag\\
\leq & -\eta\left[\left(\nabla L\right)_{\mathbf{\tilde{x}}_r}^T\mathbf w_{\mathbf{\tilde{x}}_r} + \left(\nabla L\right)_{\mathbf{\tilde{y}}_m}^T\mathbf w_{\mathbf{\tilde{y}}_m}\right] \notag\\
&\quad\quad\quad+\frac{\lambda_w}{2}\eta^2\bigg[\|\left(\nabla L\right)_{\mathbf{\tilde{x}}_r}\|^2 + \|\left(\nabla L\right)_{\mathbf{\tilde{y}}_m}\|^2\bigg]\notag\\
= & -\eta\bigg[\|\mathbf w\|^2+\beta\bigg(\sum_{r\in\mathcal R} \textrm{ReLU}\left[P_r\left(\tilde{\mathbf x}_r\right)\right] \mathbf p_r^T \mathbf w_{\mathbf{\tilde{x}}_r} \notag\\
&\quad\quad\quad+ \sum_{m\in\mathcal M} \textrm{ReLU}\left[Q_{m}\left(\mathbf{\tilde{y}}_m\right)\right] \mathbf q_{m}^T \mathbf w_{\mathbf{\tilde{y}}_m}\bigg)\\
&\quad\quad\quad-\frac{\lambda_w}{2}\eta\left(\|\left(\nabla L\right)_{\mathbf{\tilde{x}}_r}\|^2 + \|\left(\nabla L\right)_{\mathbf{\tilde{y}}_m}\|^2\right)\bigg]\notag\\
= &-\eta\left[\|\mathbf w\|^2+\beta\phi-\frac{\lambda_w}{2}\eta\left(\|\left(\nabla L\right)_{\mathbf{\tilde{x}}_r}\|^2 + \|\left(\nabla L\right)_{\mathbf{\tilde{y}}_m}\|^2\right)\right]\notag\\
\leq& 0,\notag
\end{align}
where the last inequality holds because of \eqref{obj_eta} and \eqref{obj_beta}, and $W\left(\mathbf{\tilde{x}},\mathbf{\tilde{y}}\right)$ in non- increasing.

\section{Proof of Proposition \ref{prop2}}
\label{Appendix_prop2}
From the fact that the gradient of $P_r$ is Lipschitz continuous, we have
\begin{equation}
\begin{aligned}
&P_r \left(\tilde{\mathbf x}_r-\eta\left(\nabla L\right)_{\mathbf{\tilde{x}}_r}\right) - P_r\left(\tilde{\mathbf x}_r\right) \\
&\leq-\eta\left(\nabla L\right)_{\mathbf{\tilde{x}}_r}^T\mathbf p_r +\frac{\lambda_r}{2}\eta^2\|\left(\nabla L\right)_{\mathbf{\tilde{x}}_r}\|^2\\
& = -\eta(\mathbf w_{\mathbf{\tilde{x}}_r}^T\mathbf p_r + \beta \textrm{ReLU}\left[P_r\left(\tilde{\mathbf x}_r\right)\right] \|\mathbf p_r\|^2-\frac{\lambda_r}{2}\eta\|\left(\nabla L\right)_{\mathbf{\tilde{x}}_r}\|^2) \\
&\leq 0,
\end{aligned}
\end{equation}
where the last inequality holds because of  \eqref{RSU_eta} and \eqref{RSU_beta}, and the objective $W\left(\mathbf{\tilde{x}},\mathbf{\tilde{y}}\right)$ does not increase.
On the other hand, the sufficient condition for the constraint holds is given by
\begin{equation}
\begin{aligned}
P_r \left(\tilde{\mathbf x}_r-\eta\mathbf w_{\mathbf{\tilde{x}}_r}\right)\leq P_r\left(\tilde{\mathbf x}_r\right)-\eta\mathbf w_{\mathbf{\tilde{x}}_r}^T\mathbf p_r + \frac{\lambda_r}{2}\eta^2\|\mathbf w_{\mathbf{\tilde{x}}_r}\|^2\leq 0.
\end{aligned}
\end{equation}
Therefore, the desired condition of $\eta$ in \eqref{RSU_eta_beta} can be derived.

\section{Proof of Proposition \ref{prop3}}
\label{Appendix_prop3}
From the first order Taylor series expansion around $\tilde{\mathbf x}_r$ in RSU caching constraint, we have
\begin{equation}
\begin{aligned}
0 &\geq P_r \left(\tilde{\mathbf x}_r-\eta\left(\nabla L\right)_{\mathbf{\tilde{x}}_r}\right)\\
&= P_r\left(\tilde{\mathbf x}_r\right)-\eta\left(\nabla L\right)_{\mathbf{\tilde{x}}_r}^T\mathbf p_r +o\left(\eta\|\left(\nabla L\right)_{\mathbf{\tilde{x}}_r}\|\right)\\
&=P_r\left(\tilde{\mathbf x}_r\right) - \eta\mathbf w_{\mathbf{\tilde{x}}_r}^T\mathbf p_r - \beta\eta P_r\left(\tilde{\mathbf x}_r\right) \|\mathbf p_r\|^2 \\
&\qquad\qquad\qquad+ o\left(\eta\|\left(\nabla L\right)_{\mathbf{\tilde{x}}_r}\|\right),
\end{aligned}
\end{equation}
which implies that so for positive but sufficiently small $\eta$, 
\begin{equation}
\begin{aligned}
\beta \geq \frac{1}{\eta\|\mathbf p_r\|^2}-\frac{\mathbf w_{\mathbf{\tilde{x}}_r}^T\mathbf p_r}{P_r\left(\tilde{\mathbf x}_r\right) \|\mathbf p_r\|^2}.
\end{aligned}
\end{equation}
Similarly, in MBS caching constraint, $\beta$ should be set
\begin{equation}
\begin{aligned}
\beta \geq \frac{1}{\eta\|\mathbf q_m\|^2}-\frac{\mathbf w_{\mathbf{\tilde{y}}_m}^T\mathbf q_m}{Q_m\left(\tilde{\mathbf y}_m\right) \|\mathbf q_m\|^2}.
\end{aligned}
\end{equation}
Therefore, the desired condition of $\epsilon_r$ and $\epsilon_m$ in \eqref{epsilon_r} can be derived.

\end{document}